\begin{document}

\title{Unpredictable solutions of Duffing type equations with Markov coefficients}

\author[1]{Marat Akhmet}

\author[2,3]{Madina Tleubergenova}

\author[1,2]{Akylbek Zhamanshin}

\authormark{MARAT AKHMET \textsc{et al}}

\address[1]{\orgdiv{Department of Mathematics}, \orgname{Middle East Technical University}, \orgaddress{\state{Ankara}, \country{Turkey}}}

\address[2]{\orgdiv{Department of Mathematics}, \orgname{Aktobe Regional University}, \orgaddress{\state{Aktobe}, \country{Kazakhstan}}}

\address[3]{\orgname{Institute of Information and Computational Technologies}, \orgaddress{\state{Almaty}, \country{Kazakhstan}}}

\corres{Marat Akhmet, \email{marat@metu.edu.tr}}


\abstract[Summary]{The paper considers a stochastic differential equation of Duffing type with Markov coefficients. The existence of unpredictable solutions is considered. The unpredictability is a property of bounded functions characterized by  unbounded sequences of moments of divergence and convergence in Bebutov dynamics. Markov components of the equation coefficients admit the unpredictability property. The components of the equation coefficients are derived from  a Markov chain. The existence, uniqueness and exponential stability of an unpredictable solution are proved. The sequences of divergence and convergence of the coefficients and the solution are synchronized. Numerical example that support the theoretical results are provided. 
}

\keywords{stochastic differential equation, Duffing type, Markovian coefficients, unpredictable solution}


\maketitle


\section{Introduction}\label{sec1}

It is of great importance to study  a Duffing equation with variable coefficients. It was emphasized in the book by Moon \cite{Moon2004} that  the case when the coefficients are irregular is of strong interest. The question is, what if the perturbations are random, and are of noise type, and the noise is articulated with asymptotic properties of divergence and convergence. Obviously, there are two problems appear. The first one is, how to insert the deviations in coefficients and to be able for proper evaluations. The second task is, how  the stochastic processes relate to  what we understand as deterministic chaos. That is, not to say that a random process is a deterministic phenomenon, but to demonstrate that some significant features recognized for the chaos can be seen in dynamics originated, for example, from Markov  events, which happen with probabilities. This two questions are concerned in the present article. The most related  results which are utilized to answer the questions, have been already obtained in our previous research. In papers \cite{AkhmetFen2016,Akhmet2018}, we introduced a new type of recurrence, the unpredictable point, which is an unpredictable function in Bebutov dynamics. In the research of article \cite{Akhmet2022}, it was proved that any infinite time realization of a Markov process  with finite state space  and without memory is an unpredictable sequence. 

 Most common form of stochastic differential equation (SDE)   is a differential equation with one or more stochastic processes  as terms.  The solutions of SDEs are also stochastic processes. Typically, a SDE  is an ordinary differential equation perturbed by a  term, which depends on a white noise variable calculated as the derivative of Brownian motion or the Wiener process.  SDEs  often are understood as continuous time limit of stochastic difference equations. 

In our research, we follow  the suggestion to consider  realizations of random dynamics as functions, in general,  and sequences, in particular.  As it is said in the book     \cite{Nicolis1989} 
"We have described stochastic dynamics in terms of probability distributions and their various moments. A complimentary, and for many purposes especially illuminating approach, is the study of individual outcomes of the stochastic process of interest."  We agree with the authors, and think that the outcomes have to be considered  not only for applications, but also as perturbations for various theoretical models.   In the present paper,  both:  inputs, which are  the cofficients of the equation of the Duffing type, and correspondingly,  outputs,  that is, the solutions of the equation are individualized.   

In the present study, stochastic processes appear in various roles.  The first, it is the discrete Markov chain with finite state space and without memory (one can use processes with memory in future extension of the method). A realization of the chain is  applied as an input for a dissipative  stochastic inhomogeneous equations. And the random solutions of the equations in their own turn are used as coefficients and inputs for the stochastic Duffing equation.
Finally, it is approved that the solution of the equation of Duffing type is  a continuous unpredictable function. It is clear that the scheme of the present study can be extended for other many theoretical  tasks as well as  applications.

The concept of unpredictability was introduced in papers \cite{AkhmetFen2016,Akhmet2018}, and has been applied for various problems of differential equations, neural networks and in gas discharge-semiconductor systems \cite{AkhmetN2020,AkhmetZ2020,AkhmetM2020,AkhmetK2022}. It is powerful instrument for chaos indication \cite{Akhmetbook2021,AkhmetA2021,AkhmetA2019,AkhmetM2022,Miller2019,Thakur2020}. 

The Markov research \cite{Markov1971} was  considered to show that  random processes of dependent events can also behave as independent events. Thus, simple dynamics  were invented, which have been approved as  most effective for many applications. It is impossible underestimate the role of the Markov processes  in development of random dynamics theory and its applications. For example, the ergodic theorem was strictly approved at the first time for  the dynamics. There are several  observations that the chains are strongly connected to  symbolic dynamics and to Bernoulli scheme. The final step for the comprehension was done by Donald Ornstein, who verified that $B$-automorphisms such as subshifts of finite type and Markov shifts, Anosov flows and Sinai’s billiards, ergodic automorphisms of the $n$-torus and the continued fraction transforms are, in fact, isomorphic   \cite{Ornstein1970}.  Considering these results  it is of great  necessity to show that various random processes can be described in terms of chaos, and that they relate equally in the sense. Investigators have worked   in both directions, for chaos in random dynamics, as well as for stochastic features in deterministic motions \cite{Bowen1970}. Thus, the problem of chaos in Markov chains, which is in  focus of our interest, is a part of the more general  and significant project.

The unpredictable orbit \cite{Akhmet2016} as a single isolated motion, presenting the Poincar\'{e} chaos \cite{AkhmetFen2016}, was identified as a certain event in the Markov chains  \cite{Akhmet2022},  and our present  results are not surprising in this sense, if one issues from the research in \cite{Ornstein1970} and  \cite{Akhmet2022}.
 
The Duffing equation has the form \cite{Duffing1918}
\begin{eqnarray}\label{OriginalDuffing}
	x''+ax'+bx+cx^3=F_0cos(\lambda t),
\end{eqnarray}
where $a$ is the damping coefficient, $b$ and $c$ are stiffness (restoring) coefficients, $F_0$ is the coefficient of excitation, $\lambda$ is the frequency of excitation and $t$ is the time. The major part of papers on the equation assume that  the coefficients  $a,b,c$ and $F_0$ are constant \cite{Liu2015,Zeng1997}. Considering  the original  model one can  assume   mechanical reasons for variable  coefficients. For example,  not constant  damping  and  driving  force \cite{Estevez2011}.

The main subject of this article is the following stochastic differential equation (SDE)
\begin{eqnarray} \label{duffing}
	x''(t)+(p_0(t)+p_1(t))x'(t)+(q_0(t)+q_1(t))x(t)+(r_0(t)+r_1(t))x^3(t)=(F_0(t)+F_1(t))cos(\lambda t),
\end{eqnarray}
where $t, x\in \mathbb{R};$ $\lambda$ is a real constant;  $p_0(t),$ $q_0(t),$ $r_0(t)$ and $F_0(t)$ are continuous periodic functions; coefficient components $p_1(t),$ $q_1(t),$ $r_1(t)$ and $F_1(t)$ are derived from realizations of Markov processes. This is why, we say that the coefficients are Markovian.  Let us remind that the right-hand-side of the equation is also assumed as a coefficient.  If the periodic components of the coefficients  are inserted  for the stability of the  solution of equation (\ref{duffing}), whereas Markov components cause irregularity of solutions.  It is important to emphasize that the main goal of the research not to approve a chaos for the output. But to show existence of the stochastic output for the equation, which admits the unpredictability property, and moreover, the property of the output is synchronized with the  asymptotic characteristics  of stochastic perturbations in the model.

\section{Preliminaries}\label{sec2}

In this section the definitions of unpredictable and Poisson stable functions as well as definition of unpredictable sequences are given. Moreover, the algorithm of construction for Markovian coefficients of SDE (\ref{duffing}) is provided.

\subsection{Unpredictable functions}

The following definitions are basic in the theory of unpredictable points, orbits and functions introduced \cite{AkhmetFen2016,Akhmet2018} and developed further in papers \cite{AkhmetN2020,AkhmetZ2020,AkhmetM2020,AkhmetK2022,Akhmetbook2021,AkhmetA2021,AkhmetA2019,AkhmetM2022,Miller2019,Thakur2020}.

\begin{definition} \cite{Akhmet2018} \label{def1}
	A uniformly continuous and bounded function $\psi : \mathbb  R  \rightarrow \mathbb  R$ is unpredictable if there exist positive numbers $\epsilon_{0}, \sigma$ and sequences $t_{n}, s_{n}$ both of which diverge to infinity such that $|\psi(t+t_{n})-\psi(t)|\rightarrow 0$ as $n\rightarrow\infty$ uniformly on compact subsets of $\mathbb  R$ and $|\psi(t+t_{n})-\psi(t)|\geq\epsilon_{0}$ for each $t\in [s_{n}-\sigma, s_{n}+\sigma]$ and $ n\in \mathbb  N$.
\end{definition}

In what follows, we shall call $t_n$ and $s_n$ as convergence and divergence sequences, respectively. The presence of the convergence sequence is the argument that any unpredictable function is Poisson stable \cite{Akhmet2022,Akhmet2016,Sell1971}, but not vice versa.

\begin{definition}\cite{Sell1971} \label{Poisson1} A  function $\phi(t)$$:$ $\mathbb{R}\rightarrow\mathbb{R},$ bounded and continuous,  is said to be Poisson stable if there is a sequence of moments $t_n,$ $t_n\rightarrow \infty$ as $n \rightarrow \infty,$ such that the sequence  $\phi(t+t_n)$ uniformly converges to $\phi(t)$ on each bounded interval of the real axis.
\end{definition}

The discrete version of the Definition \ref{def1} is as follows.
\begin{definition}\cite{Akhmet2018}
	A bounded sequence $\{k_i\} \in \mathbb R,$ $i \in \mathbb{Z},$  is called unpredictable if there exist a positive number $\epsilon_0$ and the sequences $\{\zeta_n\}, \{\eta_n\},$ $n \in \mathbb{N},$ of positive integers both of which diverge to infinity such that $|k_{i+\eta_n}-k_i| \rightarrow 0$ as $n \rightarrow \infty$ for each $i$ in bounded intervals of integers and $|k_{\zeta_n+\eta_n}-k_{\eta_n}|\geq \epsilon_0$ for each $n \in \mathbb{N}.$
\end{definition}

In this paper, we shall consider unpredictable sequences with non-negative arguments and call them also unpredictable sequences \cite{Akhmetbook2021}.

Let us give examples of unpredictable functions. Using an unpredictable sequence, $k_i,$ one can construct a piecewise constant function $\phi(t)$, such that $\phi(t)=k_i$ on intervals $t\in [hi,h(i+1)),$ where $h$ is a real number. In papers \cite{AkhmetN2020,AkhmetZ2020}, the function $\phi(t)$ is determined through the solution of the logistic map and the Bernoulli process is used. Another unpredictable function, $W(t),$ is a continuous solution of differential equation $W'(t)=\alpha W(t)+\phi(t),$ where $\alpha$ is a negative number. In Figure \ref{duf} (a) the graph of function $\phi(t)=\lambda_i,$ for $t \in [i,i+1),$ $i=0,1,2,...,$ is shown, where $\lambda_i$ is the unpredictable solution \cite{AkhmetFen2016} of the logistic map, $\lambda_{i+1}=\mu \lambda_i(1-\lambda_i),$ $i \in \mathbb Z,$ with $\lambda_0=0.4,$ $\mu=3.9.$ Figure \ref{duf} (b) depicts the graph of the solution, $w(t),$ of the equation with $w(0)=0.6$ and $\alpha=-2,$ which exponentially approaches to the unique unpredictable solution, $W(t),$ of the non-homogeneous equation. This is why, the red line can be considered \cite{AkhmetM2020} for t>40 as the graph of an unpredictable function. In the present paper, the coefficients of the SDE (\ref{duffing}) are determined by applying the algorithm for $W(t),$ but randomly such that a Markov chain is used instead of the logistic equation. 

\begin{figure}[h]
	\begin{minipage}[h]{0.49\linewidth}
		\center{\includegraphics[width=0.8\linewidth]{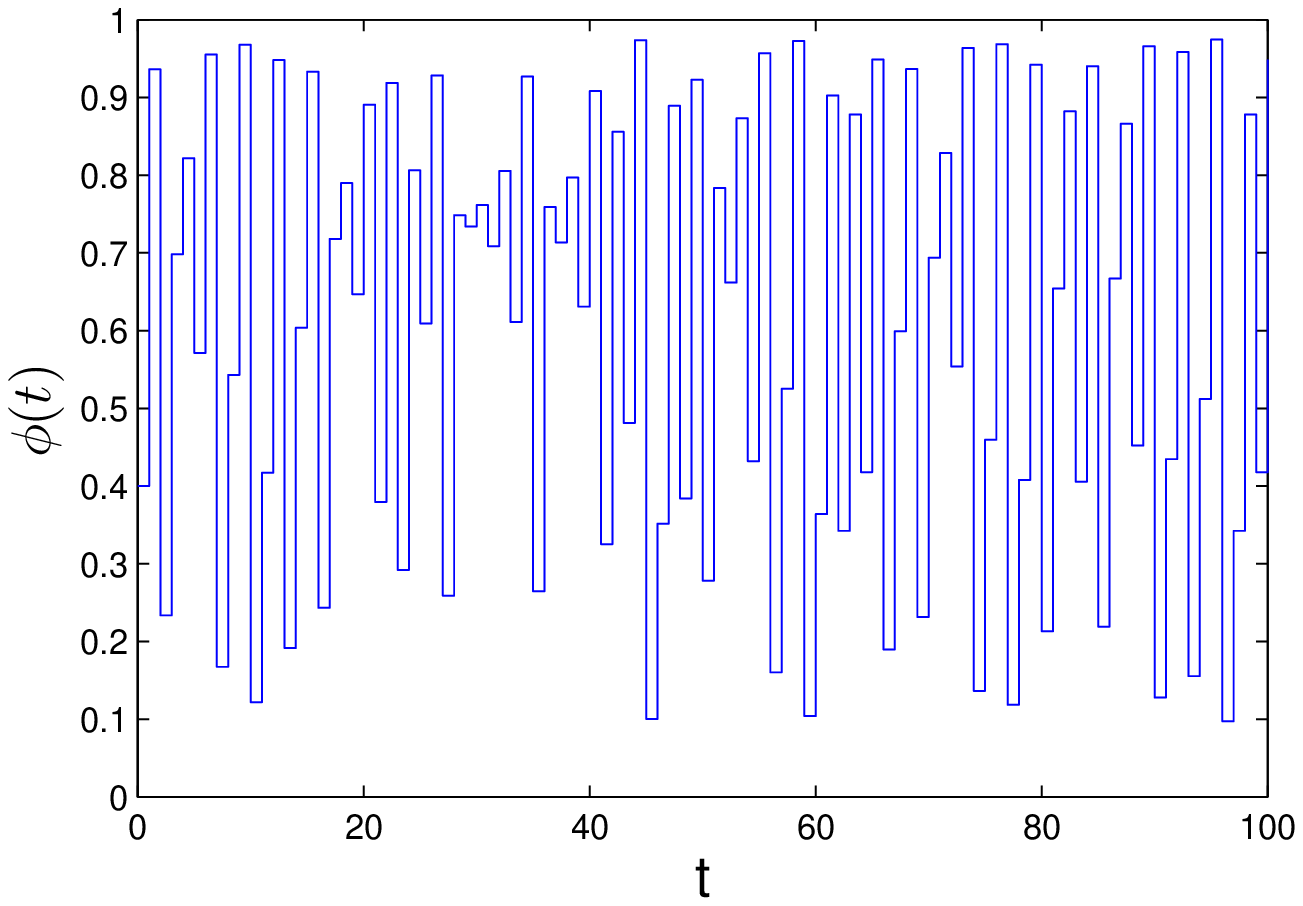} \\ a)}
	\end{minipage}
	\hfill
	\begin{minipage}[h]{0.49\linewidth}
		\center{\includegraphics[width=0.8\linewidth]{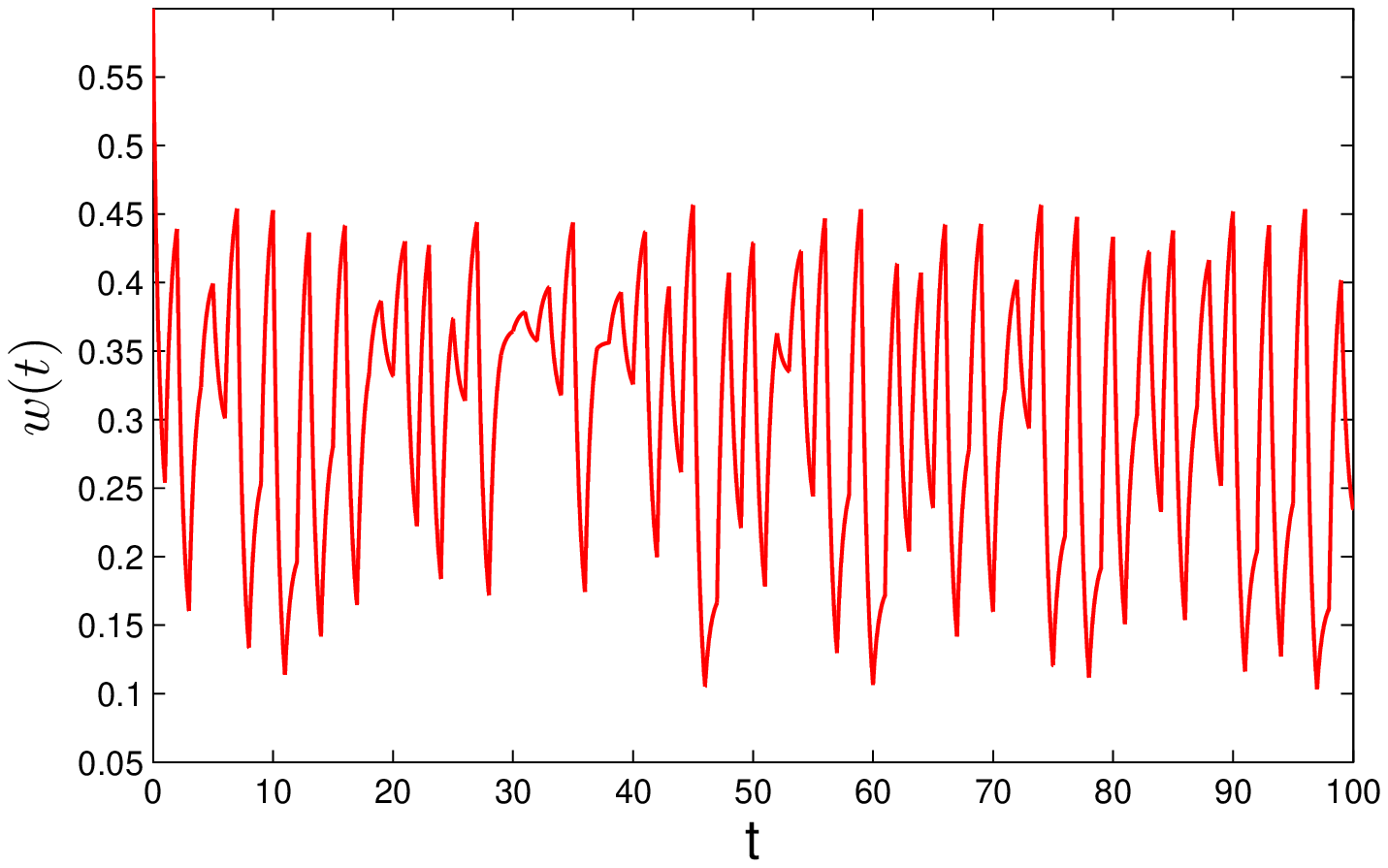} \\ b)}
	\end{minipage}
	\caption{The graphs of the discontinuous and continuous functions, $\phi(t)$ and $w(t).$}
	\label{duf}
\end{figure}


\subsection{Markovian coefficients} \label{markfunction}

In this part of the paper, we demonstrate algorithms how to construct Markovian coefficients for Duffing type equation (\ref{duffing}).

 A Markov chain is a stochastic model,   which   describes a sequence of possible events   such  that   the probability of each event depends only on the state attained in the previous   one \cite{Hajek2015,Karlin2012,Meyn2009}.	

Since  we expect    for  the  chaotic dynamics realizations to be  bounded,   the special Markov chain with  boundaries    is  constructed below. Let  the    real  valued scalar  dynamics 
\begin{eqnarray}\label{Markovchainn}
X_{n+1}= X_n + Y_n,  n  \ge 0,
\end{eqnarray}
be given  such  that   $Y_n$ is a random variable with values in $\{-2,2\}$ with probability  distribution $P(2) =P(-2) = 1/2,$  if $X_n \neq  -4,6,$   and   certain events $Y_n = 2,$  if $X_n =-4,$   and $Y_n = -2,$ if $X_n = 6.$     To    satisfy  the  construction of the  present research,   we  will  make  the  following   agreements.   First  of all,  denote $s_0 =-4,$ $s_1 =-2,$ $s_2=0,$ $s_3 = 2,$ $s_4=4,$ $s_5= 6.$   Consider,  the  state  space of the process $S=\{s_0,s_1,s_2,s_3,s_4,s_5\},$ and the value $X_n  \in  S$  is the state of the process at time $n.$  The   Markov chain,   is   a random  process  which satisfy property    $P\{X_{n+1} = s_j|X_0, . . . , X_n\} = P\{X_{n+1} = s_j|X_n\}$ for   all $ s_i,s_ j \in  S$ and $n \ge 0,$ and, moreover,  $P\{X_{n+1} = s_j|X_n = s_i\} = p_{ij},$ where   $p_{ij}$  is the   transition  probability  that  the  chain jumps from state  $i$  to  state $j.$  It  is clear   that $\sum_{j=0}^{5}p_{ij}=1$ for all $i=0,...,5.$ The unpredictability of infinite realizations of the dynamics is approved by Theorem 2.2 \cite{Akhmet2022}.  

Next, we shall need the $\rho-$type piecewise constant unpredictable functions, which are defined through the Markov chain such that  $\rho(t) = X_n,$   if  $t \in [hn,h(n+1)).$ To  visualize the $\rho-$type functions in Figure \ref{f1} (a) the  graph  of  the  function $\rho(t) = X_n,$   if  $t \in [hn,h(n+1)),$ where $h=0.5,$  $0 \le t\le  100$ is drawn.   

On the basis of the $\rho-$type functions we introduce the $\sigma-$type piecewise constant unpredictable functions such that
\begin{equation}\label{sigma}
\sigma(t)=\Sigma(\rho(t)),
\end{equation}
where $\Sigma(s)$ is a continuous function, which satisfies the inverse Lipschitz condition. It can be shown that $\sigma-$type functions are discontinuous unpredictable \cite{Akhmetbook2021}. Figure \ref{f1} (b) depicts the graph of piecewise constant unpredictable function $\sigma(t)=\rho^2(t)+\rho(t).$

\begin{figure}[h]
	\begin{minipage}[h]{0.49\linewidth}
		\center{\includegraphics[width=0.8\linewidth]{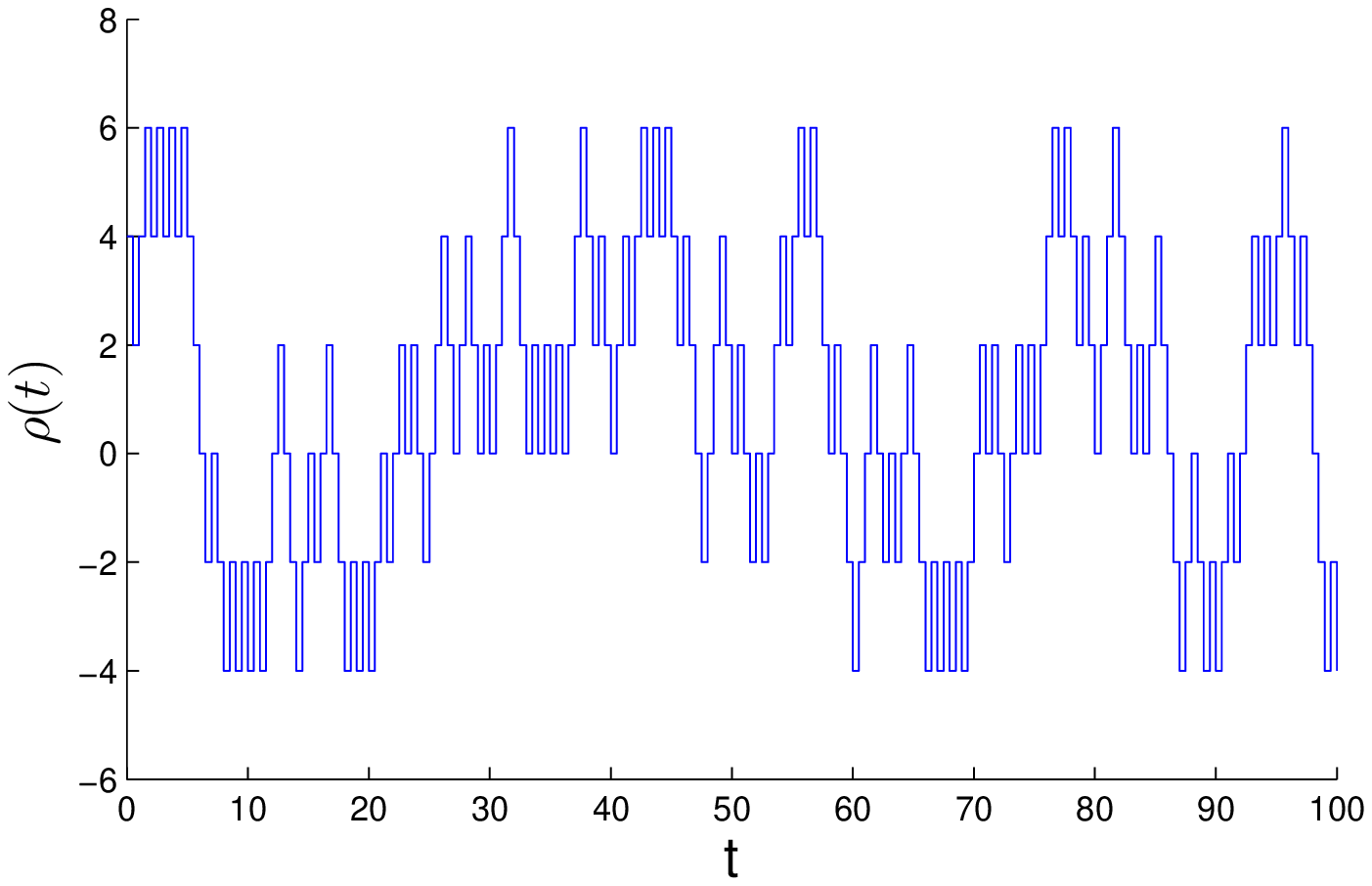} \\ a)}
	\end{minipage}
	\hfill
	\begin{minipage}[h]{0.49\linewidth}
		\center{\includegraphics[width=0.8\linewidth]{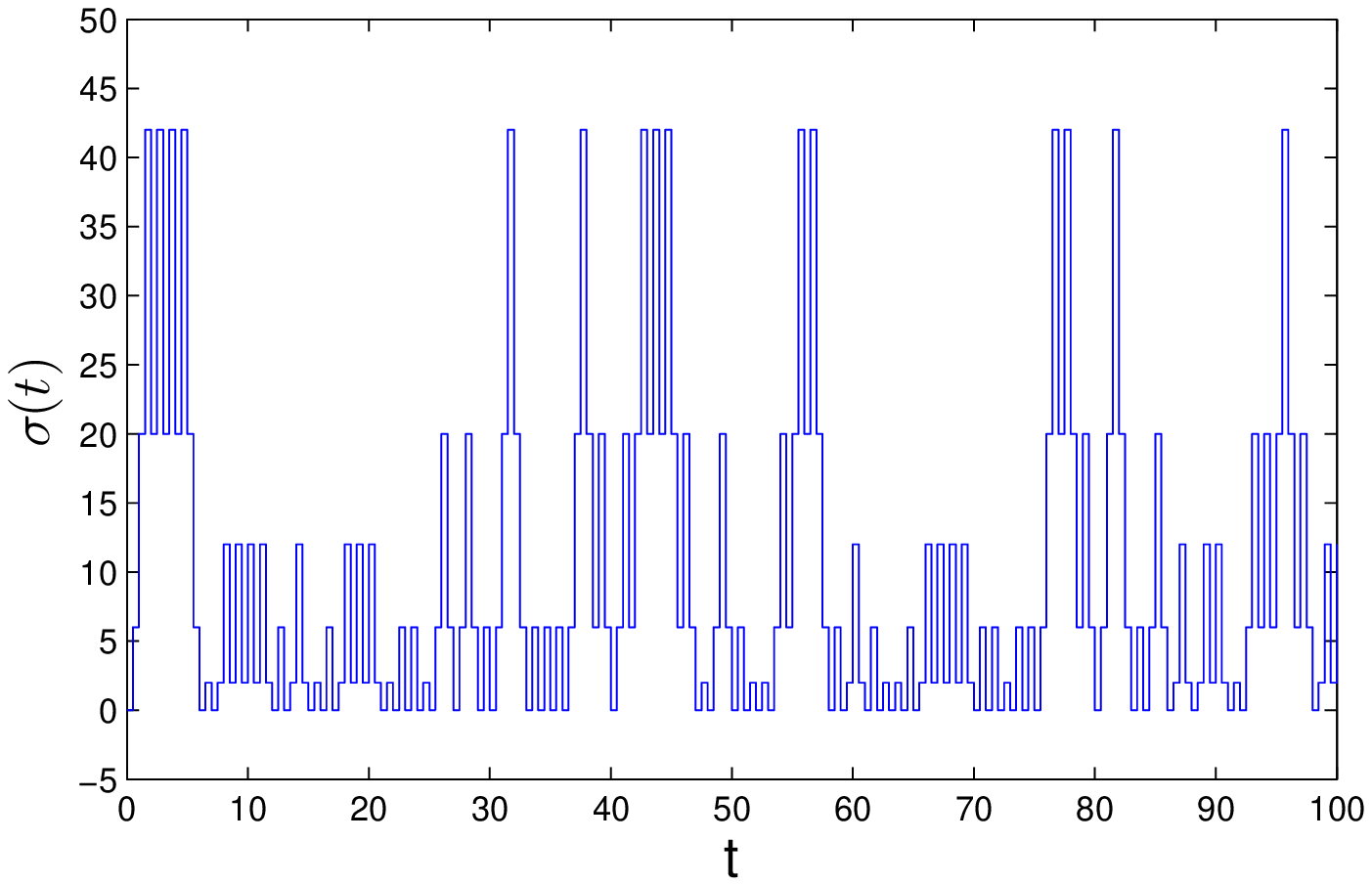} \\ b)}
	\end{minipage}
	\caption{The piecewise constant functions $\rho(t)$ and $\sigma(t).$ The  vertical  lines are  drawn  for better  visibility.}
	\label{f1}
\end{figure}


Now, let us define another type functions to finalize construction of continuous unpredictable functions through Markov process. Consider ordinary differential equation
\begin{equation}\label{Thetamarkov}
	x'(t)=\alpha x(t)+\sigma(t),
\end{equation}
 where $\alpha$ is a negative number.  The equation (\ref{Thetamarkov}) admits a unique exponentially stable unpredictable solution \cite{Akhmet2018}. We say that the solution of the equation (\ref{Thetamarkov}) is $\Theta-$type unpredictable function. It is impossible to specify the initial value of the solution, but applying the property of exponential stability one can consider any solution as arbitrary close. In Figure \ref{f2},  the graph of the solution, $x(t),$ $x(0)=0.6$ of equation (\ref{Thetamarkov}), where the parameter $\alpha$ is equal to -3, and $\sigma(t)=\rho^2(t)+\rho(t)$ is shown.  The solution exponentially approaches  an unpredictable function $\Theta(t).$ Thus, the algorithm for three types of unpredictable functions, which will be applied to build the Markovian coefficients has been finalized. Next, we shall apply it for each of the coefficients in the SDE (\ref{duffing}).

\begin{figure}[h]
	\center{\includegraphics[width=342pt,height=12pc]{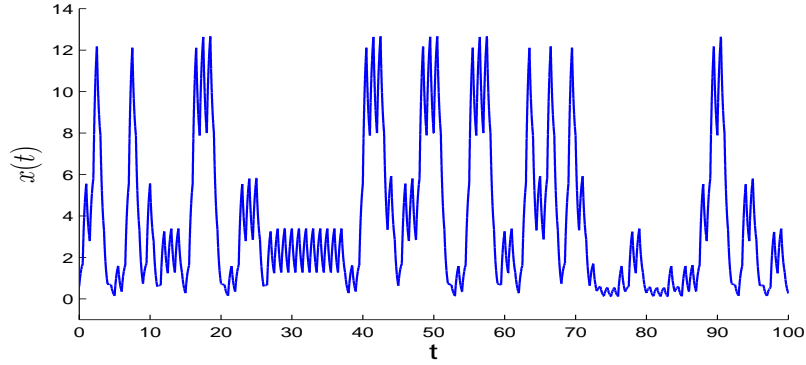}}
	\caption{The solution $x(t)$ of equation (\ref{Thetamarkov}) with initial value $x(0)=0.6$  exponentially approaches  the unpredictable Markovian function.  \label{f2}}
\end{figure}

Let us consider the following dissipative equations 
\begin{equation}\label{pmarkov}
	p^{'}_1(t)=\alpha_p p_1(t)+p(t), 
\end{equation}
\begin{equation}\label{qmarkov}
	q^{'}_1(t)=\alpha_q q_1(t)+q(t), 
\end{equation}
\begin{equation}\label{rmarkov}
	r^{'}_1(t)=\alpha_r r_1(t)+r(t), 
\end{equation}
\begin{equation}\label{fmarkov}
	F^{'}_1(t)=\alpha_F F_1(t)+f(t), 
\end{equation}
where $\alpha_p,$ $\alpha_q,$ $\alpha_r$ and $\alpha_F$ are negative real numbers, and $p(t),$ $q(t),$ $r(t)$ and $f(t)$ are unpredictable functions of $\sigma-$type. That is,  $p(t)=P(\rho(t)),$ $q(t)=Q(\rho(t)),$ $r(t)=R(\rho(t))$ and $f(t)=F(\rho(t)),$ where $P(s),$ $Q(s),$ $R(s),$ and $F(s),$ are continuous functions with inverse Lipschitz property and the function $\rho(t)$ is determined above. The exponentially stable and bounded solutions $p_1(t),$ $q_1(t),$ $r_1(t)$ and $F_1(t)$ of equations (\ref{pmarkov})-(\ref{fmarkov}) are $\Theta-$type functions. The functions are considered as Markovian components of the coefficients in the Duffing type equation (\ref{duffing}).

In this paper, we utilize Markov chains without memory for the coefficients, but it is clear that  one can consider chains with memories of arbitrary finite length in future studies.

\section{Main results}\label{sec3}

In the present section, under certain conditions, it is rigorously proved that an exponentially stable unpredictable solution takes place in the dynamics of the SDE with Markovian coefficients.

We will make use of the norm $\displaystyle \|v\|=\max(|v_1|,|v_2|),$ for a two-dimensional vector $v=(v_1,v_2),$ and corresponding norm for square matrices will be utilized. For SDE (\ref{duffing}) it is provided that a solution $x(t)$ and its derivative $x'(t)$ are bounded such that $\sup_{t \in \mathbb{R}} |x(t)|< H,$ $\sup_{t \in \mathbb{R}} |x'(t)|< H,$ where $H$ is a fixed positive number.

Assume that SDE (\ref{duffing}) satisfies the following conditions,
\begin{enumerate}[(C1)]
	\item the functions $p_0(t),$ $q_0(t),$ $r_0(t),$ and $F_0(t)$ are continuous periodic with common positive period $\omega$ such that $\lambda=\frac{2\pi}{\omega};$
	\item the Markovian components $p_1(t),$ $q_1(t),$ $r_1(t)$ $F_1(t)$ are of $\Theta-$type with common sequences of convergence $t_n,$ and divergence $s_n$ such that there exist positive numbers $\sigma,$ $\epsilon_0,$ which satisfy $|p_1(t+t_n)-p_1(t)|\geq \epsilon_{0},$ $|q_1(t+t_n)-q_1(t)|\geq \epsilon_{0},$ $|r_1(t+t_n)-r_1(t)|\geq \epsilon_{0},$ $|F_1(t+t_n)-F_1(t)|\geq \epsilon_{0},$ for all $t \in [s_n-\sigma; s_n+\sigma];$ 
	\item   $t_n\rightarrow 0 \ (mod \omega)$ as $n\rightarrow \infty;$
	\item  $s_n\rightarrow 0 \ (mod \frac{\omega}{2})$ as $n\rightarrow \infty.$
\end{enumerate}

The equation (\ref{duffing}) can be written as the  system
\begin{eqnarray} \label{system1}
	&&x_1'(t)=x_2(t), \nonumber \\
	&&x_2'(t)=-q_0(t)x_1(t)-p_0(t)x_2(t)-q_1(t)x_1(t)-p_1(t)x_2(t)-(r_0(t)+r_1(t))x^3_1(t)+(F_0(t)+F_1(t))cos(\lambda t).
\end{eqnarray}

Consider the homogeneous system, associated with (\ref{system1}),
\begin{eqnarray} \label{homo}
	&&x_1'(t)=x_2(t), \nonumber \\
	&&x_2'(t)=-q_0(t)x_1(t)-p_0(t)x_2(t).
\end{eqnarray}
 Let $X(t),$ $t \in \mathbb{R},$  is the fundamental matrix of system (\ref{homo}) such that $X(0)=I,$  and $I$  is the $2\times 2$ identical matrix.  Moreover,  $X(t,s)=X(t)X^{-1}(s)$ is the transition matrix of system (\ref{homo}) such that $X(t+\omega,s+\omega)=X(t,s)$ for all $t,s \in \mathbb{R}.$

The following assumption is needed, 
\begin{enumerate}[(C5)]
	\item the multipliers of system (\ref{homo})  in modulus are less than one.
\end{enumerate}

The last condition implies that there exist positive numbers $ K>1$ and $\mu,$ which satisfy 
\begin{eqnarray}\label{exponent}
	\|X(t,s)\|\leq Ke^{-\mu (t-s)}, 
\end{eqnarray}  
for $t\geq s$ \cite{Hartman2002}. 

For convenience, let introduce notations,
\begin{eqnarray*}
	q_0=\sup_{t \in \mathbb{R}} |q_0(t)|, p_0=\sup_{t \in \mathbb{R}} |p_0(t)|, p_1=\sup_{t \in \mathbb{R}} | p_1(t)|, q_1=\sup_{t \in \mathbb{R}}|q_1(t)|, r_0=\sup_{t \in \mathbb{R}}|r_0(t)|, r_1=\sup_{t \in \mathbb{R}}|r_1(t)|, F=\sup_{t \in \mathbb{R}}|F_0(t)+F_1(t)|.
\end{eqnarray*}

Throughout the paper, the following additional conditions are required,
\begin{enumerate}
	\item[(C6)] $\displaystyle  \frac{K}{\mu}(H(p_1+q_1)+(r_0+r_1) H^3 +F)<H;$
	\item[(C7)] $\displaystyle \frac{K}{\mu}(p_1+q_1+3(r_0+r_1) H^2)<1.$
\end{enumerate}

 We will consider the system (\ref{system1}) in the  matrix form
\begin{eqnarray}\label{system}
	y'=A(t)y+B(t)y+C(t,y)+G(t),
\end{eqnarray}
where $y(t)=column(y_1(t),y_2(t)),$
\begin{eqnarray*}
	A=\begin{pmatrix}
		0& 1\\
		-q_0(t)& -p_0(t)
	\end{pmatrix}, B(t)=\begin{pmatrix}
		0& 0\\
		-q_1(t)& -p_1(t)
	\end{pmatrix},	C(t,y)=\begin{pmatrix}
		0\\
		-(r_0(t)+r_1(t))y^3_1 
	\end{pmatrix},  G(t)=\begin{pmatrix}
		0\\
		(F_0(t)+F_1(t)) cos(\lambda t)
	\end{pmatrix}.
\end{eqnarray*}

Let us show the unpredictability of the function $G(t).$ Moreover, that the convergence and divergence sequences of the function are common with those for Markovian components. Fix a positive number $\epsilon,$  and a  bounded interval $I\subset \mathbb{R}.$ Duo to condition (C2), (C3), there exists a natural number $n_1$ such that 
\begin{eqnarray*}
	|\cos(\lambda(t+t_{n}))-\cos(\lambda t)|
	< \frac{\epsilon}{2F},
\end{eqnarray*}
 for all $t \in \mathbb{R}$ and $n>n_1.$ Besides it, there exists a natural number $n_2$ such that $$|F_0(t+t_n)+F_1(t+t_n)-F_0(t)-F_1(t)|< \frac{\epsilon}{2}$$ for all $t \in I$ and $n>n_2.$  Therefore, it is true that
\begin{eqnarray*}
	&&\|G(t+t_n)-G(t)\|=|(F_0(t+t_n)+F_1(t+t_n))\cos(\lambda (t+t_n)) -(F_0(t)+F_1(t))\cos(\lambda t)|\leq\\
	&& |F_0(t+t_n)+F_1(t+t_n)||\cos(\lambda (t+t_n))-\cos(\lambda t)|+|\cos(\lambda t)||F_0(t+t_n)+F_1(t+t_n)-F_0(t)-F_1(t)|\leq   F\frac{\epsilon}{2F}+\frac{\epsilon}{2}<\epsilon,
\end{eqnarray*}
for all $t \in I$ and $n>\max(n_1,n_2).$ On the other hand, there exist positive numbers $\epsilon_{0}, \sigma$ and sequence $u_n$ such that $|F_1(t+t_n)-F_1(t)|\geq \epsilon_{0}$ for each $t \in [s_n-\sigma,s_n+\sigma],$ $n \in \mathbb{N}.$ Moreover, for sufficiently large number $n$ one can attain that $|F_0(t+t_{n})-F_0(t)|<\frac{\epsilon_0}{8}$ and $|\cos(\lambda (t+t_n))-\cos(\lambda t)|<\frac{\epsilon_0}{8F},$ $t\in \mathbb{R}.$ Applying conditions (C3),(C4), we obtain that $\cos(\lambda(s_n+t_n))= 0 \ (mod \frac{\omega}{2})$ as $n \rightarrow \infty.$ Hence, due to the uniform continuity of the cosine function, there exists positive number $\sigma_1<\sigma$ such that $\min|\cos(\lambda(t+t_n))|>\frac{\epsilon_0}{2}$ for $t \in [s_{n}-\sigma_1,s_{n}+\sigma_1]$ and  $n \rightarrow \infty.$ This is why, we get that 
\begin{eqnarray*}
	&&\|G(t+t_n)-G(t)\|=|(F_0(t+t_n)+F_1(t+t_n))\cos(\lambda (t+t_n)) -(F_0(t)+F_1(t))\cos(\lambda t)|=\\
	&&\Big|(F_1(t+t_n)-F_1(t))\cos(\lambda (t+t_n))+(F_1(t)+F_0(t)(\cos(\lambda (t+t_n))-\cos(\lambda t))+(F_0(t+t_n)-F_0(t))\cos(\lambda (t+t_n))\Big|\\
	&&\geq|(F_1(t+t_n)-F_1(t))\cos(\lambda (t+t_n))|-|(F_1(t)+F_0(t))(\cos(\lambda (t+t_n))-\cos(\lambda t))|-|(F_0(t+t_n)-F_0(t))\cos(\lambda (t+t_n))|> \\
	&&\epsilon_{0}\min |\cos(\lambda(t+t_n))|-F\frac{\epsilon_0}{8F}-\frac{\epsilon_0}{8}>\frac{\epsilon_0}{4},
\end{eqnarray*}
for $t \in [s_{n}-\sigma_1,s_{n}+\sigma_1].$ Thus, the function $G(t)$ is unpredictable with sequences $t_n,$ $s_n,$ and positive numbers $\frac{\epsilon_0}{4},$ $\sigma_1$.

Condition (C5) implies that a bounded on the real axis function $z(t)$ is a solution of system (\ref{system}) if and only if it satisfies the equation  
\begin{equation}\label{integral}
	z(t)=\int_{-\infty}^t X(t,s)[B(s)z(s)+C(s,z(s))+G(s)]ds, \ t \in \mathbb{R}.
\end{equation}

Denote by $U$ the set of bounded and uniform continuous functions $v(t)=column(v_1(t), v_2(t)),$ with common convergence sequence $t_n$ such that $\left\|v(t) \right\|_0 < H,$ where $\displaystyle \left\|v(t) \right\|_0=\sup_{\mathbb{R}}\|v(t)\|.$  

Define on $U$ the operator $\varPhi$ as
\begin{equation}\label{operator}
	\varPhi v(t)=\int_{-\infty}^t X(t,s)(B(s)v(s)+C(s,v(s))+G(s))ds.
\end{equation}

\begin{lemma} \label{lemma2} The operator $\varPhi$ is invariant in $U.$ 
\end{lemma}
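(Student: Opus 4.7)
The plan is to verify, for an arbitrary $v \in U$, the three defining properties of membership in $U$ for $\varPhi v$: (i) $\|\varPhi v\|_0 < H$; (ii) uniform continuity on $\mathbb{R}$; and (iii) the sequence $t_n$ is a convergence sequence for $\varPhi v$.

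For (i) I would bound the integrand pointwise using $\|B(s)v(s)\| \le (p_1+q_1)H$, $\|C(s,v(s))\| \le (r_0+r_1)H^3$ and $\|G(s)\| \le F$, apply the exponential estimate (\ref{exponent}), and compute $\int_{-\infty}^{t} K e^{-\mu(t-s)}\, ds = K/\mu$. Condition (C6) then delivers $\|\varPhi v\|_0 < H$. For (ii) I would split $\varPhi v(t_2) - \varPhi v(t_1)$ for $t_1 < t_2$ into a tail integral over $(-\infty, t_1]$ involving $X(t_2,s) - X(t_1,s) = [X(t_2,t_1) - I] X(t_1,s)$, and a short integral over $[t_1, t_2]$ which is $O(|t_2 - t_1|)$ by boundedness of the integrand. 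The tail contribution is controlled by $\|X(t_2,t_1) - I\|$, which serves as a modulus of continuity depending only on $t_2 - t_1$, not on $t_1$, because $X$ is the transition matrix of an $\omega$-periodic system; combining the two pieces yields uniform continuity on $\mathbb{R}$.

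Property (iii) is the main one. After the change of variables $s = r + t_n$ in $\varPhi v(t+t_n)$, the integral becomes
\[
\int_{-\infty}^{t} X(t+t_n, r+t_n)\bigl[B(r+t_n)v(r+t_n) + C(r+t_n, v(r+t_n)) + G(r+t_n)\bigr]\, dr.
\]
I would subtract $\varPhi v(t)$, fix a compact set of $t$-values, truncate the lower limit at some $-T$, and bound the tail by a constant multiple of $e^{-\mu(t+T)}$ which is uniformly small for large $T$. On the compact segment $[-T, t]$, condition (C3) combined with the periodicity relation $X(t+\omega, s+\omega) = X(t,s)$ produces $X(t+t_n, r+t_n) \to X(t,r)$ uniformly; the unpredictability of $p_1, q_1, r_1$ yields $B(r+t_n) \to B(r)$ and $C(r+t_n, v(r+t_n)) \to C(r, v(r))$ uniformly on $[-T,t]$; the already established unpredictability of $G$ gives $G(r+t_n) \to G(r)$; and $v \in U$ supplies $v(r+t_n) \to v(r)$. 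These uniform convergences on the compact segment allow passage to the limit under the integral.

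The principal obstacle will be step (iii): assembling the several uniform convergences on the compact piece while simultaneously controlling the semi-infinite tail via the exponential dichotomy. In particular, exploiting condition (C3) together with the periodicity of $X(\cdot,\cdot)$ to approximate the transition matrix along the sequence $t_n$ is the technical heart of the argument.
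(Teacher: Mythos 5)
Your proposal is correct and follows essentially the same strategy as the paper: bound $\|\varPhi v\|_0$ via the exponential estimate and (C6), and prove the convergence-sequence property by splitting the integral into a semi-infinite tail (controlled by the exponential dichotomy) and a compact piece (controlled by the smallness of the shifted differences), which is exactly the paper's ``method of included intervals.'' Two points of difference are worth noting. For uniform continuity, you split $\varPhi v(t_2)-\varPhi v(t_1)$ into a tail and a short integral and appeal to the periodicity of $X(\cdot,\cdot)$ to obtain a $t_1$-independent modulus; the paper instead observes directly that $(\varPhi v)'(t)$ is uniformly bounded, which is shorter and avoids the argument about $\|X(t_2,t_1)-I\|$. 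For the convergence step, you are in fact more careful than the paper: after the change of variables $s=r+t_n$ you keep the discrepancy between $X(t+t_n,r+t_n)$ and $X(t,r)$ explicit and invoke (C3) together with the $\omega$-periodicity $X(t+\omega,s+\omega)=X(t,s)$ to send it to zero uniformly on the compact piece, whereas the paper simply replaces $X(t+t_n,r+t_n)$ by $X(t,r)$ without comment. That replacement is not an identity (it would require $t_n\equiv 0\ (\mathrm{mod}\ \omega)$ exactly), so your treatment closes a small gap; to fully match the paper's quantitative bound, you would fold the resulting term $\|X(t+t_n,r+t_n)-X(t,r)\|$ into the $\xi$-smallness condition alongside the other shifted differences.
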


\begin{proof} Fix a function $v(t)$ that belongs to $U.$ We have that
\begin{eqnarray*}
	\|\varPhi v(t)\| \leq  \displaystyle \int_{-\infty}^t  \|X(t,s)\|(\|B(s)\|\|v(s)\|+\|C(s,v(s))\|+\|G(s)\|) ds \leq \frac{K}{\mu}((p_1+q_1)H+(r_0+r_1) H^3+F)
\end{eqnarray*}
for all $t\in\mathbb R.$ 
Therefore, by the condition  (C6) it is true that $\left\|\varPhi v \right\|_0 <H$.

Next, the method of included intervals \cite{AkhmetN2020,AkhmetZ2020} will be utilized to prove invariantness of Poisson stability in $U.$ Let us show that  $\|\varPhi v(t+t_n) -\varPhi v(t)\|\rightarrow 0$ on each bounded interval of $\mathbb R.$  Fix an arbitrary positive number $\epsilon$ and a closed interval $[a,b],$ $-\infty <a < b < \infty,$ of the real axis. Let us choose two numbers $c <a,$ and $\xi > 0$  satisfying 
\begin{eqnarray}\label{11}
	\displaystyle \frac{K}{\mu}(H(p_1+q_1)+(r_0+r_1)  H^3 +F)e^{-\mu(a-c)} <  \frac{\epsilon}{4},
\end{eqnarray} 
\begin{eqnarray}\label{12}
	\frac{K}{\mu} \xi (p_1+q_1+H+3(r_0+r_1)  H^2+H^3+1)[1-e^{-\mu(b-c)}] < \frac{\epsilon}{2}.
\end{eqnarray}
Conditions (C3), (C4) imply that for sufficiently  large $n$ the following inequalities are valid $\|B(t+t_n)-B(t)\|<\xi,$  $\|G(t+t_n)-G(t)\| < \xi,$ $| r_0(t+t_n)+r_1(t+t_n)-r_0(t)-r_1(t)|<\xi$ and  $\|v(t+t_n)-v(t)\|<\xi$ for $t\in[c,b].$ We obtain that 

\begin{eqnarray*}\label{Oper} 
&&\|\varPhi v(t+t_n) -\varPhi v(t)\|= \|\int_{-\infty}^{t} X(t,s)(B(s+t_n)v(s+t_n)+C(s+t_n,v(s+t_n))+G(s+t_n)) ds-\\
		&&\int_{-\infty}^{t} X(t,s)(B(s)v(s)+C(s,v(s))+G(s)) ds\| \leq \\
		&&\|\int_{-\infty}^{t}X(t,s)(B(s+t_n)v(s+t_n)-B(s)v(s)+C(s+t_n,v(s+t_n))-C(s,v(s))+\\
		&&G(s+t_n)-G(s)) ds \| \leq \int_{-\infty}^{c}\|X(t,s)\|\Big(\|B(s+t_n)v(s+t_n)-B(s)v(s)\|+\\
		&&\|C(s+t_n,v(s+t_n))-C(s,v(s))\|+\|G(s+t_n)-G(s)\|\Big) ds  +\\
		&&  \int_{c}^{t}\|X(t,s)\|\Big(\|B(s+t_n)(v(s+t_n)-v(s))\|+\|v(s)(B(s+t_n)-B(s))\|\Big)ds+\\
		&&\int_{c}^{t}\|X(t,s)\|\Big(\|C(s+t_n,v(s+t_n))-C(s+t_n,v(s))\|+ \nonumber \\
		&&\|C(s+t_n,v(s))-C(s,v(s))\|\Big)ds+
		\int_{c}^{t}\|X(t,s)\|\|G(s+t_n)-G(s)\| ds  \leq \\
		&&\frac{2K}{\mu}((p_1+q_1)H+(r_0+r_1) H^3+F)e^{-\mu(a-c)}+	\frac{K}{\mu} (\xi (p_1+q_1)+H\xi )[1-e^{-\mu(b-c)}]+\\
		&&	\frac{K}{\mu} (3\xi(r_0+r_1) H^2+\xi H^3)[1-e^{-\mu(b-c)}]+	\frac{K}{\mu}\xi[1-e^{-\mu(b-c)}],
\end{eqnarray*}
is correct for all  $t \in [a,b].$ From inequalities (\ref{11}) and (\ref{12}) it follows that
$\|\varPhi v(t+t_n) -\varPhi v(t)\| < \epsilon$
for $t\in[a,b].$ Therefore, the sequence $\varPhi v(t+t_n)$ uniformly converges to $\varPhi v(t) $  on each bounded interval of $\mathbb R.$ 

The function $\varPhi v(t)$ is a  uniformly continuous, since its derivative is a uniformly bounded on the real axis. Thus, the set $U$ is invariant for the operator $\varPhi$.
\end{proof}

\begin{theorem} \label{theorem} The SDE (\ref{duffing}) with Markovian coefficients admits a unique exponentially stable unpredictable solution provided that the conditions (C1)-(C7) are valid. Moreover, the divergence and convergence sequences of the output stochastic dynamics are common with those, $t_n$ and $s_n,$  of the stochastic components of the coefficients. 
\end{theorem}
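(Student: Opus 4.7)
My strategy is to apply the Banach contraction principle to the operator $\varPhi$ on the metric space $U$ and then show that the unique fixed point inherits both asymptotic properties from the forcing $G(t)$, whose unpredictability with sequences $t_n$, $s_n$ was established just before Lemma~\ref{lemma2}. Lemma~\ref{lemma2} has already supplied $\varPhi(U)\subseteq U$. To obtain the contraction estimate, I would use the elementary bound $|v_1^3-\tilde v_1^3|\le 3H^2|v_1-\tilde v_1|$ valid on $[-H,H]$ to derive
\begin{equation*}
\|C(s,v(s))-C(s,\tilde v(s))\|\le 3(r_0+r_1)H^2\|v(s)-\tilde v(s)\|,
\end{equation*}
and combine this with (\ref{exponent}) and the sup bound on $\|B(s)\|$ to obtain $\|\varPhi v-\varPhi\tilde v\|_0 \le \tfrac{K}{\mu}(p_1+q_1+3(r_0+r_1)H^2)\|v-\tilde v\|_0$, a strict contraction by (C7). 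The Banach fixed-point theorem then delivers a unique $z\in U$ that, via (\ref{integral}), is the unique bounded solution of (\ref{system}) on $\mathbb{R}$.

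Second, I would verify that $z$ is unpredictable with the very same sequences $t_n$, $s_n$. The convergence property on compact intervals is inherited from $U$. For the divergence, starting from
\begin{equation*}
z(t+t_n)-z(t)=\int_{-\infty}^{t} X(t,s)\bigl[\Delta_B(s)+\Delta_C(s)+G(s+t_n)-G(s)\bigr]\,ds,
\end{equation*}
where $\Delta_B$, $\Delta_C$ collect the shifted differences of the $B$- and $C$-terms, I would split the integration at $s_n-L$ for a large $L>\sigma_1$. The tail is bounded by $2K\mu^{-1}\bigl(H(p_1+q_1)+(r_0+r_1)H^3+F\bigr)e^{-\mu L}$ and thus made as small as desired. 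On the central window, the inequality $\|G(s+t_n)-G(s)\|\ge \epsilon_0/4$ on $[s_n-\sigma_1,s_n+\sigma_1]$ already derived before Lemma~\ref{lemma2} supplies the forcing, while the $\Delta_B$- and $\Delta_C$-contributions are absorbed using the Lipschitz constants that produced the contraction, so (C7) leaves a residual positive lower bound $\epsilon^*>0$ on $\|z(t+t_n)-z(t)\|$ for $t$ in a subinterval of the central window around $s_n$.

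Third, exponential stability of $z$ follows from a standard Gronwall argument: if $y(t)$ is any other bounded solution and $w(t)=y(t)-z(t)$, then variation-of-constants, (\ref{exponent}), and the same Lipschitz estimates give a linear integral inequality whose Gronwall resolution decays at rate $\mu-K(p_1+q_1+3(r_0+r_1)H^2)$, which is strictly positive by (C7). The main obstacle is the second step: converting the narrow-window lower bound on $\|G(t+t_n)-G(t)\|$ into one on $\|z(t+t_n)-z(t)\|$ through an integral kernel that samples the entire past requires careful balancing of the tail decay against the $B$- and $C$-contributions, and it is here that condition (C7) must play a decisive role a second time.
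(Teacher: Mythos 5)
Your first step (contraction on $U$, relying on Lemma~\ref{lemma2} and (C7)) and your third step (Gronwall from variation of constants, decay rate $\mu-K(p_1+q_1+3(r_0+r_1)H^2)$) are essentially what the paper does; the paper additionally records the completeness of $U$ via the usual three-$\epsilon$ argument before invoking Banach, a small point you omit but could add in one line.

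Your second step, however, is a genuinely different route from the paper's and it has a gap that I do not think can be patched in the form you describe. You propose to estimate $z(t+t_n)-z(t)$ from the variation-of-constants representation $\int_{-\infty}^t X(t,s)[\ldots]\,ds$ split at $s_n-L$. The far tail $(-\infty,s_n-L]$ decays, but the near tail $[s_n-L,\,s_n-\sigma_1]$ does not: the convergence $\|G(s+t_n)-G(s)\|\to 0$ (and likewise for $\Delta_B$, $\Delta_C$) holds uniformly only on \emph{fixed} compact intervals, whereas $[s_n-L,\,s_n-\sigma_1]$ drifts to $+\infty$ with $n$; on this drifting interval $\Delta_G$ is merely bounded by $2F$, so its kernel-weighted contribution is of order $\tfrac{2FK}{\mu}e^{-\mu\sigma_1}=O(1)$. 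The contribution from the central window $[s_n-\sigma_1,\,s_n+\sigma_1]$, where you have the lower bound $\|\Delta_G\|\ge\epsilon_0/4$, is only $O(\sigma_1)$ and is further weakened because $X(t,s)$ is a rotating kernel that need not preserve sign or direction, so a pointwise lower bound on $\|\Delta_G\|$ does not pass to a lower bound on $\|\int X(t,s)\Delta_G(s)\,ds\|$. Your remark that (C7) ``absorbs'' $\Delta_B,\Delta_C$ is not substantiated either: $\Delta_B$ contains $(B(s+t_n)-B(s))z(s)$, which on the near tail suffers the same $O(1)$ problem, not a Lipschitz problem. The paper avoids all of this by \emph{not} using the integral kernel at this stage: it integrates the ODE directly over the short window $[s_n,s_n+\sigma_2]$, so $z(t+t_n)-z(t)$ equals the boundary difference $z(s_n+t_n)-z(s_n)$ plus integrals over an interval of length $\le\sigma_2$ with no memory of the past; it then splits into the dichotomy $\|z(s_n+t_n)-z(s_n)\|<\epsilon_0/l$ versus $\ge\epsilon_0/l$, in the first case letting the $\int_{s_n}^t\Delta_G$ term dominate (now a genuine $O(\sigma_2\epsilon_0)$ lower bound against $O(\sigma_2\cdot\text{small})$ remainders), and in the second case propagating the boundary gap by uniform continuity. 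Adopting that short-window-plus-dichotomy structure is, in my view, the step your proposal is missing.
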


\begin{proof} Let us prove completeness of the set $U.$ Consider  a Cauchy  sequence $\phi^k(t)$ in $U$,   which converges to a limit function $\phi(t)$ on $\mathbb{R}$. Fix a closed and bounded interval $I \subset \mathbb{R}.$ We get that
	\begin{eqnarray}\label{Poisson}
		\|\phi(t+t_n)-\phi(t)\|\le \|\phi(t+t_n)-\phi^k(t+t_n)\|+\|\phi^k(t+t_n)-\phi^k(t)\|+\|\phi^k(t)-\phi(t)\|. 	
	\end{eqnarray}
	One can choose sufficiently large $n$ and $k,$ such that each term on the right side of (\ref{Poisson}) is smaller than $\frac{\epsilon}{3}$ for an arbitrary  $\epsilon>0$ and $t\in I$. Thus, we conclude that the sequence $\phi(t+t_n)$ is uniformly converging to $\phi(t)$  on $I.$ That is, the set ${U}$   is complete.
	
	Next, we shall show that the operator $\varPhi : U\rightarrow  U$ is a contraction. For any $\varphi(t),$ $\psi(t) \in U,$ one can attain  that
	\begin{eqnarray*}
		&& \|\varPhi \varphi(t) -\varPhi \psi(t)\| 
		\leq \displaystyle \int_{-\infty}^t \|X(t,s)\|(\|B(s)\|\|\varphi(s)-\psi(s)\|+\|C(s,\varphi(s))-C(s,\psi(s))\|) ds \leq \\
		&&\frac{K}{\mu}\Big((p_1+q_1)\|\varphi(t)-\psi(t)\|_0+(r_0+r_1)(|\varphi^2_1(t)|+|\varphi_1(t)||\psi_1(t)|+|\psi^2_1(t)|)\|\varphi(t)-\psi(t)\|_0\Big)< \\
		&& \frac{K}{\mu}(p_1+q_1+3(r_0+r_1) H^2)\|\varphi(t)-\psi(t)\|_0. 
	\end{eqnarray*}
	Therefore,  the inequality $\left\|\varPhi \varphi - \varPhi \psi\right\|_0 <\displaystyle \frac{K}{\mu}(p_1+q_1+3(r_0+r_1) H^2) \left\|\varphi-\psi\right\|_0$ holds, and according to the condition (C7) the operator $\Pi:  U \to  U$ is a contraction. 
	
	By the contraction mapping theorem there exists  the unique fixed point, $z(t) \in U$  of the operator $\varPhi,$ which is the unique solution of SDE (\ref{duffing}). In what follows, we will show that the solution $z(t)$ is unpredictable. 
	
	Applying the relations
	\begin{eqnarray*}
		z(t)=z(s_n)+\int_{s_n}^{t}A(s)z(s)ds+\int_{s_n}^{t}B(s)z(s)ds+\int_{s_n}^{t}C(s,z(s))ds+\int_{s_n}^{t}G(s)ds
	\end{eqnarray*} 
	and
	\begin{eqnarray*}
		z(t+t_n)=z(s_n+t_n)+\int_{s_n}^{t}A(s+t_n)z(s+t_n)ds+\int_{s_n}^{t}B(s+t_n)z(s+t_n)ds+\int_{s_n}^{t}C(s+t_n,z(s+t_n))ds+\int_{s_n}^{t}G(s+t_n)ds
	\end{eqnarray*}
	we obtain that
		\begin{eqnarray*}
		&&z(t+t_n)-z(t)=z(s_n+t_n)-z(s_n)+\int_{s_n}^{t}(A(s+t_n)z(s+t_n)-A(s)z(s))ds+\int_{s_n}^{t}(B(s+t_n)z(s+t_n)-B(s)z(s))ds+\\
		&&\int_{s_n}^{t}(C(s+t_n,z(s+t_n))-C(s,z(s)))ds+\int_{s_n}^{t}(G(s+t_n)-G(s))ds.
	\end{eqnarray*}

Using conditions (C3), (C4) and uniform continuity of the entries of the matrix $A(t),$ periodic function $r_0(t)$ and solution $z(t),$ one can find a positive numbers $\sigma_2$ and integers $l,k, n_0$ such that the following inequalities are satisfied
	\begin{equation}\label{1}
		\sigma_2 <\sigma_1;
	\end{equation}
	\begin{equation}\label{2}
		\|A(t+t_n)-A(t)\|<\epsilon_0 (\frac{1}{l}+\frac{2}{k}), \quad t\in \mathbb{R}, n>n_0;
	\end{equation}
	\begin{equation}\label{3}
		|r_0(t+t_n)-r_0(t)|<\epsilon_0 (\frac{1}{l}+\frac{2}{k}), \quad t\in \mathbb{R}, n>n_0;
	\end{equation}
	\begin{eqnarray}\label{4}
		&&\frac{2l\sigma_2}{3}\Big(\epsilon_{0}\Big[\frac{1}{4}-(p_1+q_1+\max(q_0+p_0, 1)+H+3H^2(r_0+r_1)+H^3)(\frac{1}{l}+\frac{2}{k})\Big]-2(p_1+q_1)H-2H^3r_1\Big)\geq \epsilon_{0}; 
	\end{eqnarray}
	\begin{equation}\label{5}
		\|z(t+s)-z(t)\|<\epsilon_0 \min (\frac{1}{k},  \frac{1}{4l}),  \quad t\in \mathbb{R}, |s|<\sigma_2.
	\end{equation}
 Let the numbers  $\sigma_2, l$ and $k$ as well as numbers $n \in  \mathbb{N},$  be fixed. Consider the following two alternatives: (i)  $\|z(s_n+t_n)-z(s_n)\| <\epsilon_0/l;$ \quad (ii) $\|z(s_n+t_n)-z(s_n)\| \geq \epsilon_0/l.$  
	
	(i) Using (\ref{5}) one can show that 
	\begin{eqnarray}\label{omeg}
		\|z(t+t_n)-z(t_n)\|\leq
		\|z(t+t_n)-z(s_n+t_n)\|+ \|z(s_n+t_n)-z(s_n)\|+ \|z(s_n)-z(t)\|< \frac{\epsilon_0}{l}+\frac{\epsilon_0}{k}+\frac{\epsilon_0}{k} =\epsilon_0 (\frac{1}{l}+\frac{2}{k}), 
	\end{eqnarray}
	if $t \in [s_n, s_n+\sigma_2].$ 
	
	Therefore, the inequalities  (\ref{1})-(\ref{omeg}) imply that
	
	\begin{eqnarray*}
		&&\|z(t+t_n)-z(t)\|\geq \int_{s_n}^{t}\|G(s+t_n)-G(s)\|ds-\|z(s_n+t_n)-z(s_n)\|-\int_{s_n}^{t}\|B(s+t_n)z(s+t_n)-B(s)z(s)\|ds-\\
		&&\int_{s_n}^{t}\|A(s+t_n)z(s+t_n)-A(s)z(s)\|ds-\int_{s_n}^{t}\|C(s+t_n,z(s+t_n))-C(s,z(s))\|ds\geq\\
		&&\int_{s_n}^{t}\|G(s+t_n)-G(s)\|ds-\|z(s_n+t_n)-z(s_n)\|-\int_{s_n}^{t}\|B(s+t_n)-B(s)\|\|z(s+t_n)\|ds-\\
		&&\int_{s_n}^{t}\|B(s)\|\|z(s+t_n)-z(s)\|ds-\int_{s_n}^{t}\|A(s+t_n)-A(s)\|\|z(s+t_n)\|ds-\int_{s_n}^{t}\|A(s)\|\|z(s+t_n)-z(s)\|ds-\\
		&&\int_{s_n}^{t}|r_0(s+t_n)z^3_1(s+t_n)-r_0(s+t_n)z^3_1(s))|ds-\int_{s_n}^{t}|r_0(s+t_n)z^3_1(s)-r_0(s)z^3_1(s))|ds-\\
		&&\int_{s_n}^{t}|r_1(s+t_n)z^3_1(s+t_n))-r_1(s+t_n)z^3_1(s))|ds-\int_{s_n}^{t}|r_1(s+t_n)z^3_1(s)-r_1(s)z^3_1(s)|ds\geq\\
		&&\sigma_2\frac{\epsilon_{0}}{4}-\frac{\epsilon_0}{l}-2 \sigma_2(p_1+q_1)H-\sigma_2(p_1+q_1)\epsilon_0 (\frac{1}{l}+\frac{2}{k})-\sigma_2 \epsilon_0 (\frac{1}{l}+\frac{2}{k})H-\sigma_2 \max(q_0+p_0, 1)\epsilon_0 (\frac{1}{l}+\frac{2}{k})-\\
		&&3\sigma_2 r_0H^2 \epsilon_0 (\frac{1}{l}+\frac{2}{k})-\sigma_2 \epsilon_0 (\frac{1}{l}+\frac{2}{k})H^3-3\sigma_2 r_1H^2 \epsilon_0 (\frac{1}{l}+\frac{2}{k})-2 \sigma_2 r_1H^3>\frac{\epsilon_0}{2l}
	\end{eqnarray*}
	for $t \in[s_n,s_n+\sigma_2].$
	
	(ii) If $|z(t_n+s_n)-z(s_n)| \geq \epsilon_0/l$ it is not difficult to find that  (\ref{5}) implies
	\begin{eqnarray}
		\|z(t+t_n)-z(t)\|\geq  \|z(t_n+s_n)-z(s_n)\| - \|z(s_n)-z(t)\|- \|z(t+t_n)-z(t_n+s_n)\| \geq \frac{\epsilon_0}{k}-\frac{\epsilon_0}{l}-\frac{\epsilon_0}{4l}=\frac{\epsilon_0}{2l}, 
	\end{eqnarray}
	for  $t \in [s_n-\sigma_2, s_n+\sigma_2]$  and $n \in \mathbb N.$ Thus, it can be conclude that $z(t)$ is unpredictable solution with sequences $t_n,$ $s_n$ and positive numbers $\frac{\sigma_2}{2},$ $\frac{\epsilon_0}{2l}.$
	
Finally, let us discuss the exponential stability of the solution $z(t).$ It is true that 
\begin{eqnarray*}
	z(t)=X(t,t_0)z(t_0)+\int_{t_0}^t X(t,s)(B(s)z(s)+C(s,z(s))+G(s))ds.
\end{eqnarray*} 
Denote by $\bar{z}(t)$  another solution of SDE (\ref{duffing}) such that
\begin{eqnarray*}
	\bar{z}(t)=X(t,t_0)\bar{z}(t_0)+\int_{t_0}^t X(t,s)(B(s)\bar{z}(s)+C(s,\bar{z}(s))+G(s))ds.
\end{eqnarray*}
Making use of the relation
\begin{eqnarray*}
	\bar{z}(t)-z(t)=X(t,t_0)(\bar{z}(t_0)-z(t_0))+\int_{t_0}^t X(t,s)\Big(B(s)(\bar{z}(s)-z(s))+C(s,\bar{z}(s))-C(s,z(s))\Big)ds,
\end{eqnarray*}
one can obtain 

\begin{eqnarray}
	&&\|\bar{z}(t)-z(t)\|\leq \|X(t,t_0)\|\|\bar{z}(t_0)-z(t_0)\|+\int_{t_0}^t \|X(t,s)\|(\|B(s)\|\|\bar{z}(s)-z(s)\|+\|C(s,\bar{z}(s))-C(s,z(s))\|)ds \leq \nonumber \\
	&&Ke^{-\mu(t-t_0)}\|\bar{z}(t_0)-z(t_0)\|+\int_{t_0}^t Ke^{-\mu(t-s)}\Big((p_1+q_1)\|\bar{z}(s)-z(s)\|+(r_0+r_1) (|\bar{z}^2_1(s)|+|\bar{z}_1(s)|| z_1(t)|+| z^2_1(s)| )\|\bar{z}(s)-z(s)\|\Big)ds \nonumber \\
	&&\leq \frac{K}{\mu}(p_1+q_1+3(r_0+r_1) H^2)\|\bar{z}(t)-z(t)\|,  
\end{eqnarray}
for $t \in \mathbb{R}.$ With the aid of the Gronwall-Bellman Lemma, one can verify that 
\begin{eqnarray}\label{stable}
	\|\bar{z}(t)-z(t)\|\leq Ke^{(K(p_1+q_1+3(r_0+r_1) H^2)-\mu)(t-t_0)}\|\bar{z}(t_0)-z(t_0)\|,
\end{eqnarray}
for all $t\geq t_0,$ and condition  (C7) implies that the unpredictable solution, $z(t),$ is exponentially stable solution of SDE (\ref{duffing}). The theorem is proved. 
\end{proof}

The following section provides an example to confirm the theoretical results by using numerical simulations. It illustrates various unpredictable dynamics of the stochastic equation of Duffing type (\ref{duffing}) for different contributions of periodic and non-periodic components of coefficients.

\section{A numerical example and discussions}

Below, to visualize the exponentially stable unpredictable solution of $\Theta-$type and determine dynamics of Markov coefficients we shall apply solutions $\phi_p(t),$ $\phi_p(0)=0.5,$ $\phi_q(t),$ $\phi_q(0)=0.6,$ $\phi_r(t),$ $\phi_r(0)=0.4,$ and $\phi_F(t),$ $\phi_F(0)=0.3,$ of the dissipative equations (\ref{pmarkov})-(\ref{fmarkov}), where $\alpha_p=-5, \alpha_q=-3,\alpha_r=-2, \alpha_F=-4,$ and $p(t)=q(t)=r(t)=f(t)=\rho(t).$ The piecewise constant function $\rho(t),$ is constructed   by Markov chain with values over intervals $[hn, h(n+1)),$ $n \in \mathbb{N},$ and described in Section \ref{markfunction}. 
\begin{figure}[h]
	\begin{minipage}[h]{0.49\linewidth}
		\center{\includegraphics[width=0.8\linewidth]{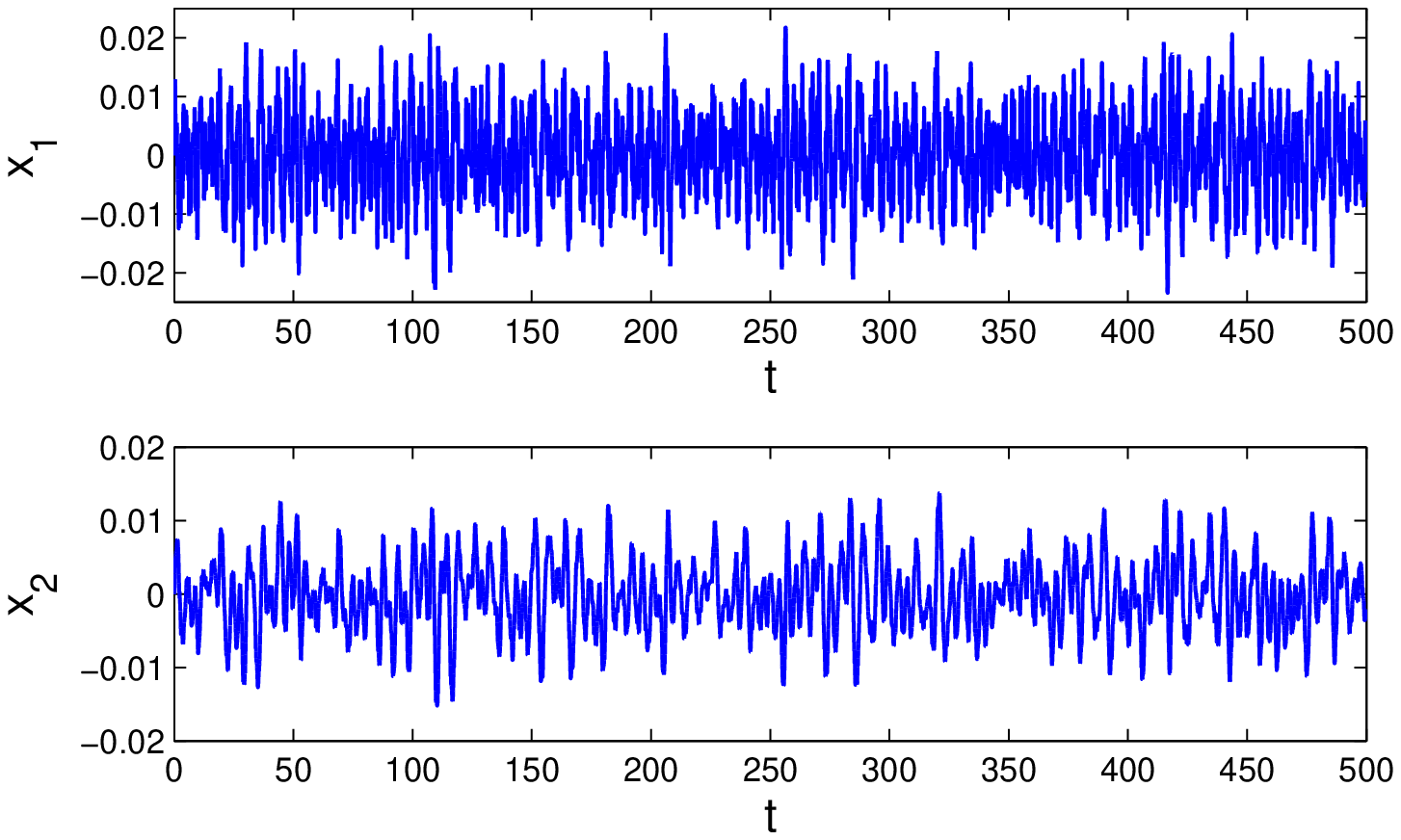} \\ a)}
	\end{minipage}
	\hfill
	\begin{minipage}[h]{0.49\linewidth}
		\center{\includegraphics[width=0.8\linewidth]{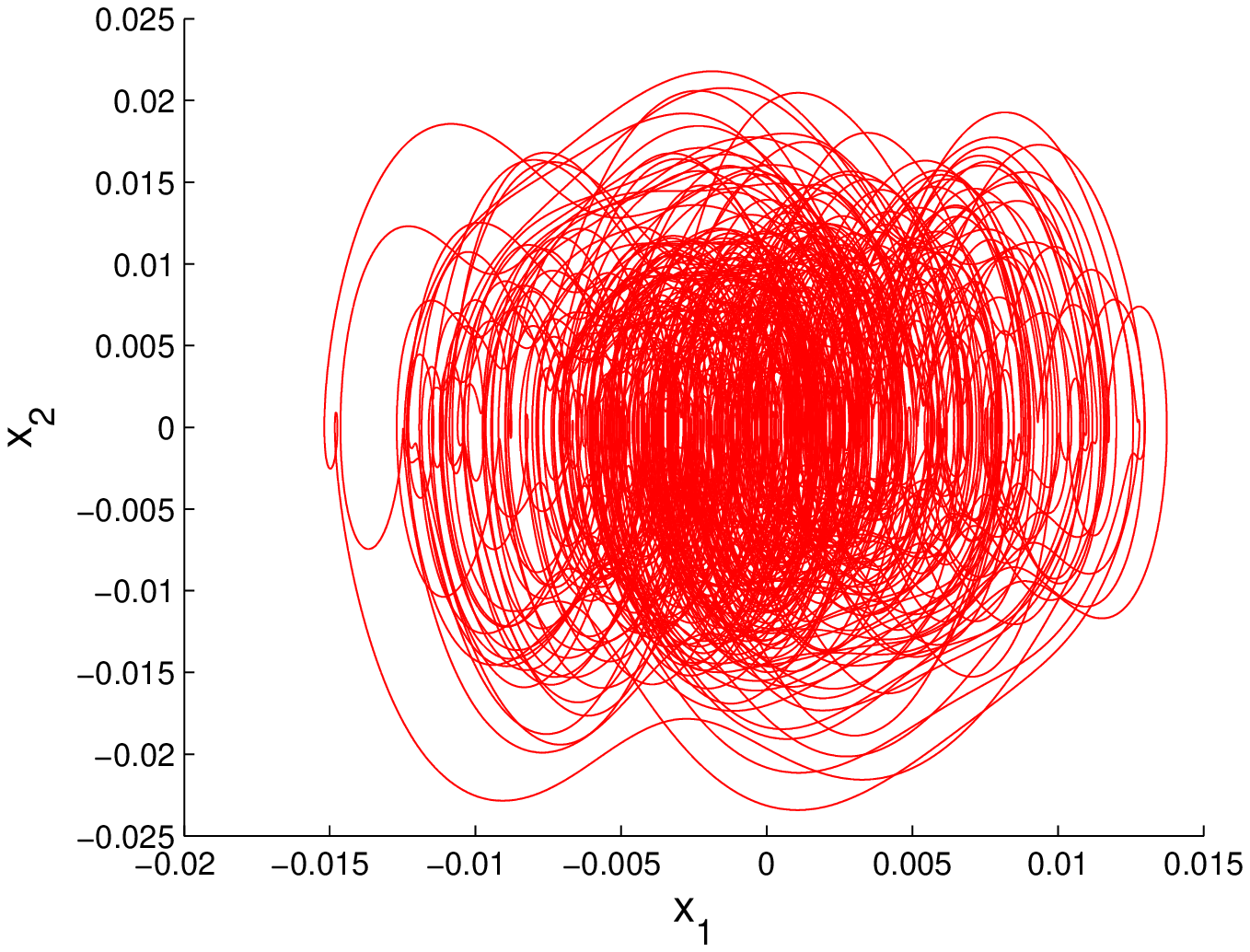} \\ b)}
	\end{minipage}
	\caption{The time series of the coordinates and trajectory of the solution $x(t)$ of equation (\ref{example}), with Markovian components obtained for $h=0.1\pi.$ The stochastic influence  is strong, since of the time step of the Markov chain is smaller than the period.}
	\label{sol1}
\end{figure}

Consider the following stochastic Duffing equation
\begin{eqnarray}\label{example}
	x''(t)+(p_0(t)+p_1(t))x'(t)+(q_0(t)+q_1(t))+(r_0(t)+r_1(t))x^3(t)=(F_0(t)+F_1(t))\cos(\lambda t),
\end{eqnarray}
where $\lambda=1,$ $p_0(t)=2-0.3\sin(4t),$ $q_0(t)=2-0.2\cos(2t),$ $r_0(t)=0.04\cos(4t),$ $F_0(t)=0.05\sin(8t),$ $p_1(t)=-0.4\phi_p(t),$ $q_1(t)=0.1\phi_q(t),$ $r_1(t)=0.02\phi_r(t)$ and $F_1(t)=0.02\phi_F(t).$ The periodic functions $p_0(t),$ $q_0(t),$ $r_0(t)$ and $F_0(t)$ with common period $\omega=2\pi.$ All conditions from (C1) to (C7) are hold with  $K=1.5,$ $\mu=2\pi,$ $p_0=2.3,$ $q_0=2.2,$ $r_0=0.04,$ $p_1=0.48,$ $q_1=0.2,$ $r_1=0.06,$ $F=0.03$ and $H=0.025.$ According to Theorem \ref{theorem}, the equation (\ref{example}) admits a unique exponentially stable unpredictable solution. In Figures \ref{sol1}-\ref{sol3}, the graphs of the coordinates and trajectories of solutions $x(t)$ for SDE (\ref{example}) with $h=0.2\pi, 2\pi, 8\pi,$ and initial values $x_1(0)=x_2(0)=0$ are shown. The solutions $x(t)$  exponentially approach the unpredictable solutions, $z(t),$ as time increases. 
\begin{figure}[h]
	\begin{minipage}[h]{0.49\linewidth}
		\center{\includegraphics[width=0.8\linewidth]{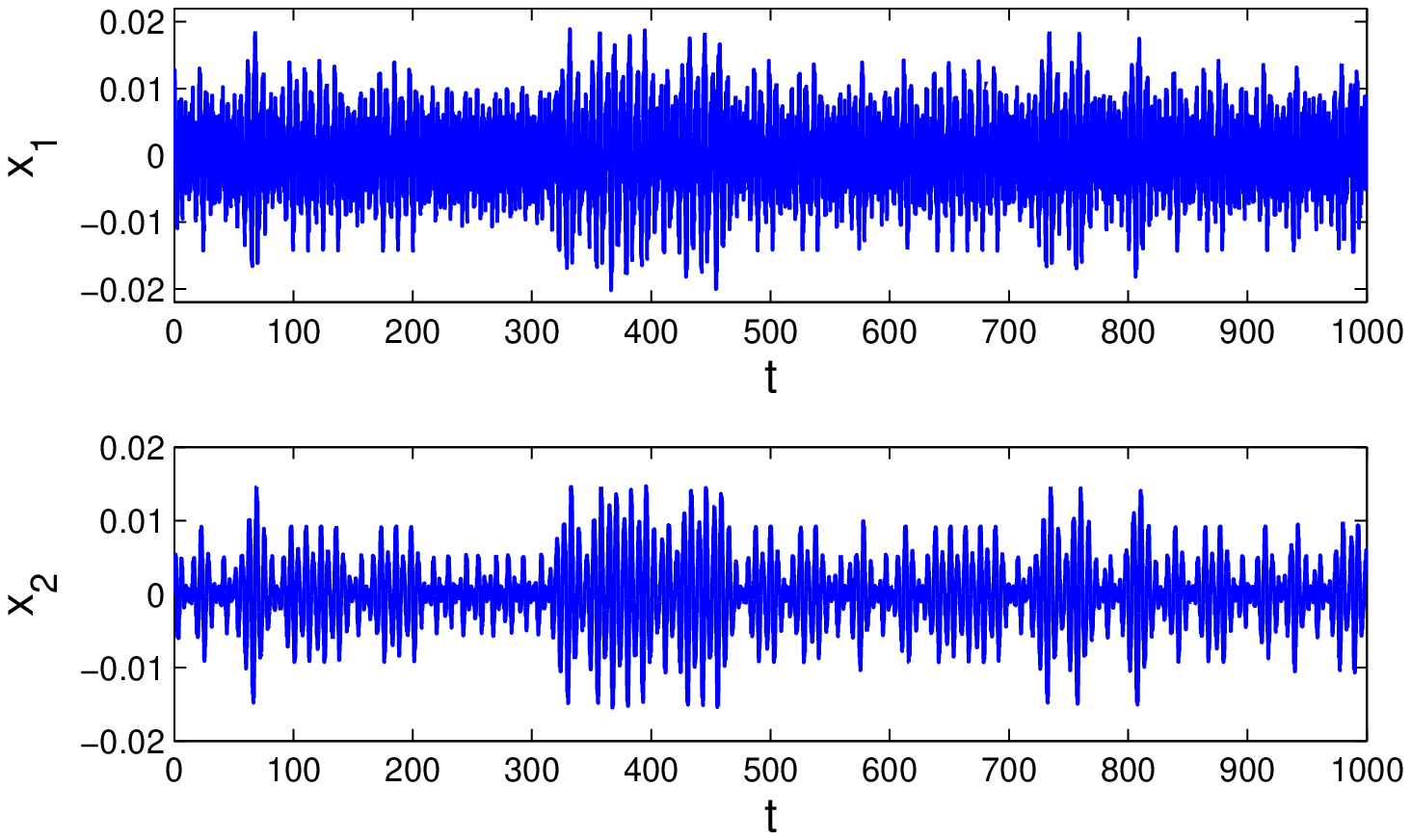} \\ a)}
	\end{minipage}
	\hfill
	\begin{minipage}[h]{0.49\linewidth}
		\center{\includegraphics[width=0.8\linewidth]{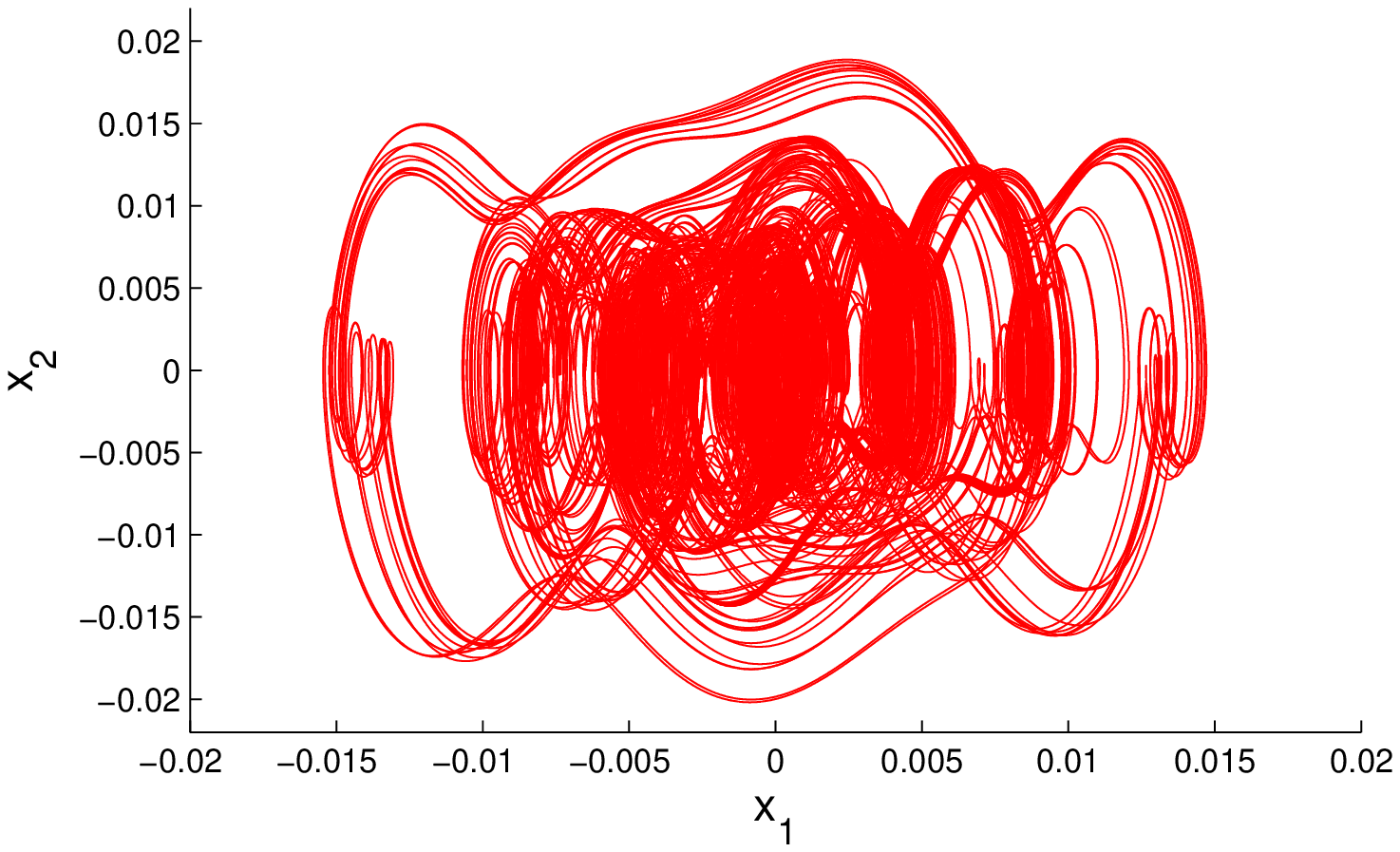} \\ b)}
	\end{minipage}
	\caption{The $x_1, x_2-$ coordinates and trajectory of the solution of (\ref{example}), with Markovian components obtained for $h=2\pi.$ That is, the value of time steps is equal to the period $2\pi$ and our simulations show that the periodicity still cannot be seen  clearly in this case.}
	\label{sol2}
\end{figure}

The Theorem \ref{theorem} can be interpreted as a result on response-driver synchronization \cite{Gonzalez2004} of the unpredictability  in	the stochastic system (\ref{pmarkov})-(\ref{fmarkov}) and the stochastic Duffing equation (\ref{duffing}). That is, the theorem claims, in particular, that the unpredictable solution $(p_1(t), q_1(t), r_1(t))$ of the system and the unpredictable solution, $z(t),$ admit common sequences of convergence and divergence. Delta synchronization of the unpredictability for gas discharge-semiconductor systems is considered in \cite{AkhmetK2022}. 
\begin{figure}[h]
	\begin{minipage}[h]{0.49\linewidth}
		\center{\includegraphics[width=0.8\linewidth]{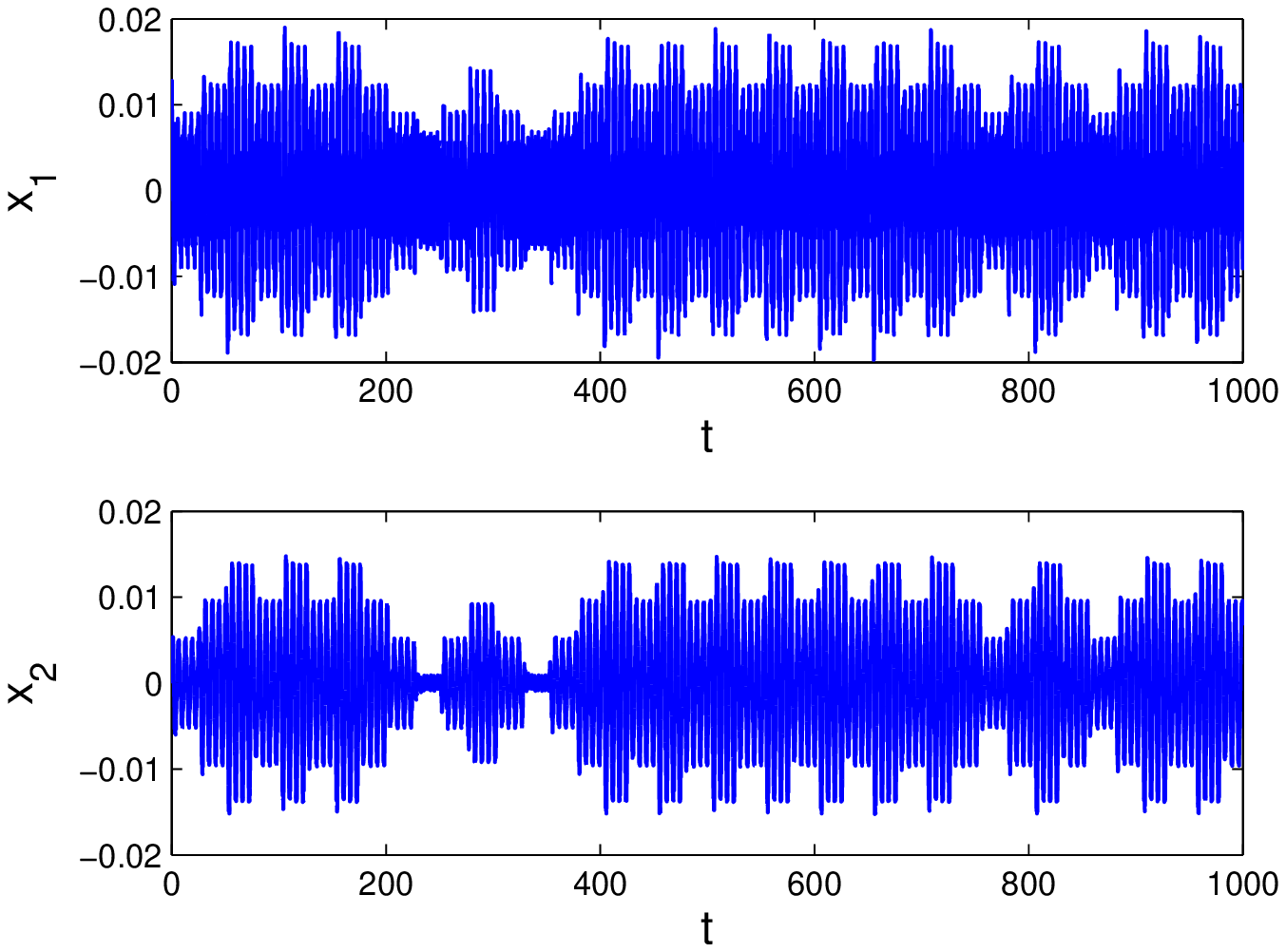} \\ a)}
	\end{minipage}
	\hfill
	\begin{minipage}[h]{0.49\linewidth}
		\center{\includegraphics[width=0.8\linewidth]{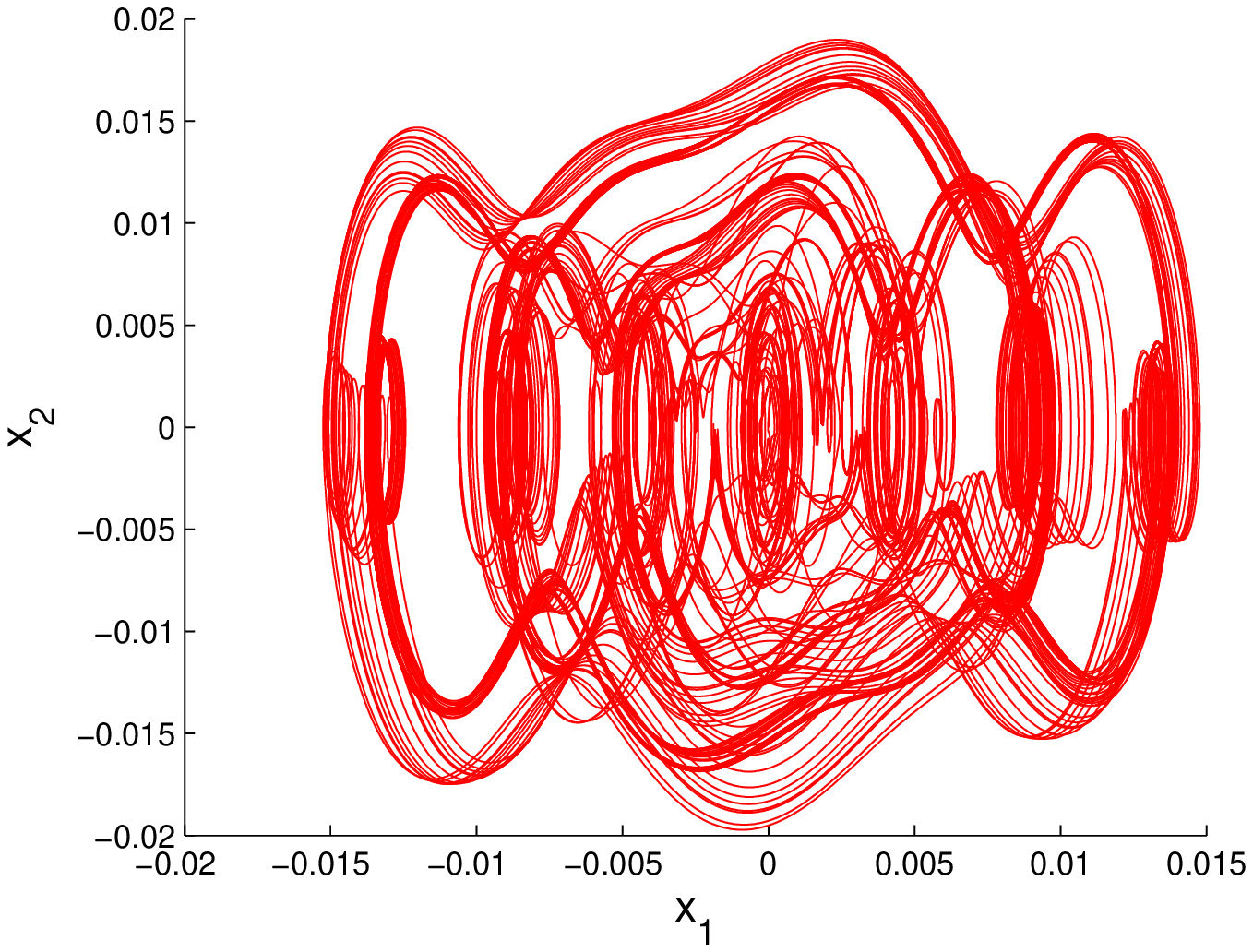} \\ b)}
	\end{minipage}
	\caption{The graphs of coordinates and trajectory of the solution $x(t)$ of equation (\ref{example}), where Markovian components obtained for $h=8\pi.$ One can see that several intervals of periodicity are placed in one step of  the constancy.}
	\label{sol3}
\end{figure}

 We consider the various simulations for the model, since they are with different steps $h$ of the Markov function $\rho(t).$  The choice makes qualitative difference in the stochastic dynamics. If the step is in the range $h \le 2\pi$ the behavior is strongly irregular, and there is no any indication of periodicity. For $h > 2\pi$ one can observe that   periodicity seen locally in time, and phenomenon of intermittency \cite{Pomeau1980} appears.  Thus, our results demonstrate not only quantitative asymptotic characteristics, but also possibility to learn reasons for different phenomena of chaos. Possibly, the simulations  may give lights on the origins of intermittency. Additionally, for $h \le 2\pi$ the effect of periodicity is seen in the "symmetry" of the phase portraits, which is reasoned  also by the finite values of the state space. For the values   of $h$ less than  $2\pi$, any symmetry  can not be seen, since the stochastic dynamics dominates significantly.





\section*{Acknowledgments}
M. Akhmet and A. Zhamanshin have  been supported by  2247-A National Leading Researchers Program of TUBITAK, Turkey, N 120C138. M. Tleubergenova  has been supported by the Science Committee of the Ministry of Education and Science of the Republic of Kazakhstan (grant No. AP14870835).

\subsection*{Author contributions}

Marat Akhmet:Conceptualization, formal analysis, investigation, methodology. Madina Tleubergenova:Formal analysis, investigation, supervision, validation. Akylbek Zhamanshin:Investigation, methodology, software.

\subsection*{Financial disclosure}

None reported.

\subsection*{Conflict of interest}

The authors declare no potential conflict of interests.

\nocite{*}
\bibliography{wileyNJD-AMA}%

\begin{thebibliography}{10}
\providecommand \doibase [0]{http://dx.doi.org/}%

\bibitem{Moon2004}
Moon F. {\it Chaotic vibrations: an introduction for applied scientists and
  engineers}.
\newblock New Jersey, USA: John Wiley and Sons .
\newblock 2004.
\newblock ISBN 9783527602841.

\bibitem{AkhmetFen2016}
Akhmet M, Fen M. Poincare chaos and unpredictable functions. {\it Commun.
  {N}onlinear {S}ci. {N}umer. {S}imulat.} 2016\string; 48\string: 85--94.

\bibitem{Akhmet2018}
Akhmet M, Fen M. Non-autonomous equations with unpredictable solutions. {\it
  Commun. {N}onlinear {S}ci. {N}umer. {S}imulat.} 2018\string; 59\string:
  657--670.

\bibitem{Akhmet2022}
Akhmet M. Unpredictability in {M}arkov chains. {\it Carpathian {J}ournal of
  {M}athematics} 2022\string; 38(1)\string: 13--19.

\bibitem{Nicolis1989}
Nicolis G, Prigogine I. {\it Exploring {C}omplexity}.
\newblock New Yorke, USA: W.H. Freeman and company .
\newblock 1989.
\newblock ISBN 0716718596.

\bibitem{AkhmetN2020}
Akhmet M, Tleubergenova M, Fen M, Nugayeva Z. Unpredictable solutions of linear
  impulsive systems. {\it Mathematics} 2020\string; 8(10)\string: 1--16.

\bibitem{AkhmetZ2020}
Akhmet M, Tleubergenova M, Zhamanshin A. Quasilinear differential equations
  with strongly unpredictable solutions. {\it Carpathian {J}ournal of
  {M}athematics} 2020\string; 36(3)\string: 341--349.

\bibitem{AkhmetM2020}
Akhmet M, Tleubergenova M, Zhamanshin A. Shunting inhibitory cellular neural
  networks with strongly unpredictable oscillations. {\it Commun. {N}onlinear
  {S}ci. {N}umer. {S}imulat.} 2020\string; 89\string: 105287.

\bibitem{AkhmetK2022}
Akhmet M, Ba\c{s}kan K, Ye\c{s}il C. Delta synchronization of {P}oincaré chaos
  in gas discharge-semiconductor systems. {\it Chaos} 2022\string; 32\string:
  083137.

\bibitem{Akhmetbook2021}
Akhmet M. {\it Domain Structured Dynamics: unpredictability, chaos, randomness,
  fractals, differential equations and neural networks}.
\newblock Bristol, UK: IOP Publishing .
\newblock 2021.
\newblock ISBN 978-0-7503-3507-2.

\bibitem{AkhmetA2021}
Akhmet M, Alejaily E. Abstract {S}imilarity, {F}ractals and {C}haos. {\it
  Discrete and {C}ontinuous {D}ynamical {S}ystems} 2021\string; 26\string:
  2479--2497.

\bibitem{AkhmetA2019}
Akhmet M, Alejaily E. Domain-{S}tructured {C}haos in a {H}opfield neural
  network. {\it Int. {J}. {B}ifurc. {C}haos} 2019\string; 29(14)\string:
  1950205.

\bibitem{AkhmetM2022}
Akhmet M. Abstract {H}yperbolic {C}haos. {\it Discontinuity, {N}onlinearity and
  {C}omplexity} 2022\string; 11(1)\string: 133--138.

\bibitem{Miller2019}
Miller A. Unpredictable points and stronger versions of Ruelle--Takens and
  Auslander--Yorke chaos. {\it Topol. {A}ppl.} 2019\string; 253\string: 7--16.

\bibitem{Thakur2020}
Thakur R, Das R. Strongly Ruelle--Takens, strongly Auslander--Yorke and
  Poincare chaos on semiflows. {\it Commun. {N}onlinear {S}ci. {N}umer.
  {S}imulat.} 2020\string; 81\string: 105018.

\bibitem{Markov1971}
Markov A. Extension of the limit theorems of probability theory to a sum of
  variables connected in a chain. In:  Howard R. \kern-2pt, ed. {\it reprinted
  in {A}ppendix {B} in Dynamic {P}robabilistic {S}ystems}Series in {D}ecision
  and {C}ontrol. Hoboken, New Jersey, USA: John Wiley and Sons.  1971.
\newblock ISBN 9780471416654.

\bibitem{Ornstein1970}
Ornstein D. Bernoulli shifts with the same entropy are isomorphic. {\it
  Advances in {M}ath.} 1970\string; 4\string: 337352.

\bibitem{Bowen1970}
Bowen R. Markov partitions for {A}xiom {A} diffeomorphisms. {\it Am. {J}.
  {M}ath.} 1970\string; 92\string: 725747.

\bibitem{Akhmet2016}
Akhmet M, Fen M. Unpredictable points and chaos. {\it Commun. {N}onlinear
  {S}ci. {N}umer. {S}imulat.} 2016\string; 40\string: 1--5.

\bibitem{Duffing1918}
Duffing G. {\it Erzwungene Schwingungen bei Veranderlicher Eigen--frequenz und
  ihre technische Bedeutung}.
\newblock Braunschweig , Germany: F. Vieweg und Sohn .
\newblock 1918.

\bibitem{Liu2015}
Liu B, Tunc C. Pseudo almost periodic solutions for a class of nonlinear
  Duffing system with a deviating argument. {\it J. {A}ppl. {M}ath. {C}omput.}
  2015\string; 49\string: 233--242.

\bibitem{Zeng1997}
Zeng W. Almost periodic solutions for nonlinear Duffing equations. {\it Acta
  {M}ath. {S}in.} 1997\string; 13\string: 373--380.

\bibitem{Estevez2011}
Estevez P, Kuru S, Negro J, Nieto L. Solutions of a Class of Duffing
  Oscillators with Variable Coefficients. {\it International {J}ournal of
  {T}heoretical {P}hysics} 2011\string; 50\string: 2046--2056.

\bibitem{Sell1971}
Sell G. {\it Topological dynamics and ordinary differential equations}.
\newblock London , UK: Van Nostrand Reinhold .
\newblock 1971.
\newblock ISBN 978-0442075026.

\bibitem{Hajek2015}
Hajek R. {\it Random {P}rocesses for {E}ngineers}.
\newblock Cambridge, England: Cambridge University Press .
\newblock 2015.
\newblock ISBN 9781316164600.

\bibitem{Karlin2012}
Karlin S, Taylor H. {\it A {F}irst {C}ourse in {S}tochastic {P}rocesses}.
\newblock Cambridge, Massachusetts: Academic Press .
\newblock 2012.
\newblock ISBN 1483254240.

\bibitem{Meyn2009}
Meyn S, Tweedie R. {\it Markov {C}hains and {S}tochastic {S}tability}.
\newblock Cambridge, England: Cambridge University Press .
\newblock 2009.
\newblock ISBN 9780511626630.

\bibitem{Hartman2002}
Hartman P. {\it Ordinary {D}ifferential {E}quations}.
\newblock Boston, MA, USA: Birkhauser .
\newblock 2002.
\newblock ISBN 9783764330682.

\bibitem{Pomeau1980}
Pomeau Y, Manneville P. Intermittent transition to turbulence in dissipative
  dynamical systems. {\it Commun. {M}ath. {P}hys.} 1980\string; 74\string:
  189--197.

\bibitem{Walters1982}
Walters P. {\it An {I}ntroduction to {E}rgodic {T}heory}.
\newblock New York: Springer--Verlag .
\newblock 1982.
\newblock ISBN 978-0-387-95152-2.

\end{thebibliography}




\end{document}